\def\d{\mathrm{d}}
\def\laweq{\buildrel \d \over =} 
\def\pto{\buildrel \mathrm p \over \rightarrow}
\def\dto{\buildrel \d \over \rightarrow}
\newcommand{\var}{\mathrm{Var}}
\newcommand{\cov}{\mathrm{Cov}}
\newcommand{\SD}{\mathrm{SD}}
\newcommand{\VaR}{\mathrm{VaR}}
\newcommand{\ES}{\mathrm{ES}}
\newcommand{\ex}{\mathrm{ex}}
\newcommand{\E}{\mathbb{E}}
\newcommand{\R}{\mathbb{R}}
\newcommand{\Gini}{\mathrm{Gini}}
\newcommand{\N}{\mathbb{N}}
\newcommand{\p}{\mathbb{P}}
\newcommand{\id}{\mathds{1}}
\newcommand{\X}{\mathcal X}
\renewcommand{\(}{\left(}
\renewcommand{\)}{\right)}
\renewcommand{\[}{\left[}
\renewcommand{\]}{\right]}
\newcommand{\esssup}{\mathrm{ess\mbox{-}sup}}
\newcommand{\essinf}{\mathrm{ess\mbox{-}inf}}
\renewcommand{\ge}{\geqslant}
\renewcommand{\le}{\leqslant}
\renewcommand{\geq}{\geqslant}
\renewcommand{\leq}{\leqslant}
\renewcommand{\epsilon}{\varepsilon}
\newcommand{\PreserveBackslash}[1]{\let\temp=\\#1\let\\=\temp}
\newcolumntype{C}[1]{>{\PreserveBackslash\centering}p{#1}}
\newcolumntype{R}[1]{>{\PreserveBackslash\raggedleft}p{#1}}
\newcolumntype{L}[1]{>{\PreserveBackslash\raggedright}p{#1}}
\theoremstyle{plain}
\newtheorem{theorem}{Theorem}
\newtheorem{proposition}{Proposition}
\theoremstyle{definition}
\newtheorem{definition}{Definition}
\newtheorem{example}{Example}
\newtheorem{remark}{Remark}
\begin{document}

\renewcommand{\baselinestretch}{1.05}
\textheight 22.23cm


\title{\bf Parametric measures of variability induced by risk measures}

\author{Fabio Bellini\thanks{Department of Statistics and Quantitative Methods, University of Milano-Bicocca.  \newline   \indent \indent   Piazza dell'Ateneo Nuovo, 1, 20126, Milan, Italy.    \hfill \Letter~\texttt{fabio.bellini@unimib.it}} \and Tolulope Fadina\thanks{Corresponding author. Department of Mathematical Sciences, University of Essex. \newline  \indent  \indent   Wivenhoe Park, Colchester, CO4 3SQ, United Kingdom. \hfill \Letter~\texttt{t.fadina@essex.ac.uk} } 
 \and Ruodu Wang\thanks{Department of Statistics and Actuarial Science, University of Waterloo.    \newline   \indent   \indent  200 University Avenue W., Waterloo ON, N2L 3G1, Canada.     \hfill \Letter~\texttt{wang@uwaterloo.ca}  }  \and Yunran Wei\thanks{School of Mathematics and Statistics, Carleton University. \newline  \indent \indent 1125 Colonel By Dr., Ottawa ON, K1S 5B6, Canada.  \hfill \Letter~\texttt{yunran.wei@carleton.ca}   } }

\date{\today} 
\maketitle


%

\bigskip
\begin{abstract}
We present a general framework for a comparative theory of variability measures, with a particular focus on the recently introduced one-parameter families of inter-Expected Shortfall differences and inter-expectile differences, that are explored in detail and compared with the widely known and applied inter-quantile differences. 

From the mathematical point of view, our main result is a characterization of symmetric and comonotonic variability measures as mixtures of inter-Expected Shortfall differences, under a few additional technical conditions. Further, we study the stochastic orders induced by the pointwise comparison of inter-Expected Shortfall and inter-expectile differences, and discuss their relationship with the dilation order. From the statistical point of view, we establish asymptotic consistency and normality of the natural estimators and provide a rule of the thumb for cross-comparisons. 

Finally, we study the empirical behaviour of the considered classes of variability measures on the S\&P $500$ Index under various economic regimes, and explore the comparability of different time series according to the introduced stochastic orders.  
\end{abstract}

\noindent%
{\it Keywords:}  Risk management, variability measures, Expected Shortfall, expectiles, stochastic orders.
\vfill

\section{Introduction}
\label{sec:intro}

Several measures of distributional variability are widely used in statistics, probability, economics, finance, physical sciences, and other disciplines. 
In this paper, we study a general theory of variability measures with an emphasis on three symmetric one-parameter families generated by popular parametric risk measures: Value-at-Risk (VaR),  Expected Shortfall (ES), and expectiles. The corresponding induced variability measures 
are the \emph{inter-quantile difference}, the \emph{inter-ES difference}, and the \emph{inter-expectile difference}. While the first one is a classical measure of statistical dispersion widely used e.g.~in box plots, 
the other two  are, to the best of our knowledge, relatively new: 
the inter-ES difference appears in Example 4 of \cite{WWW20b} as a signed Choquet integral, and the inter-expectile difference has been studied in \cite{BMR20} via a connection to option prices. The present paper is a first unifying study, focused on their comparative qualitative and quantitative properties.  

The mathematical theory of risk measures is extensive, and a standard reference is \cite{FS16}. 
As it is well-known, VaR is simply a quantile and ES is a coherent risk measure in the sense of \cite{ADEH99}. Both VaR and ES are implemented in current banking and insurance regulation frameworks; we refer to \cite{MFE15} for a comprehensive background and  \cite{WZ20} for a more recent account. Expectiles, originally introduced in the statistical literature by \cite{NP87}, have received an increasing attention in risk management, as it has been shown that they are the only elicitable coherent risk measures (\cite{Z16}).
We refer e.g.~to \cite{BKMR14}  and \cite{BD15}  for more on the theory and financial applications of expectiles. For a comparison of the above risk measures in the context of regulatory capital calculation, see \cite{EPRWB14} and \cite{EKT15}.

The theory of variability measures has been studied from different angles; see \cite{D98} for a review in the context of the measurement of statistical dispersion. A  mathematical formulation closer to our setting is the notion of deviation measure introduced in \cite{RUZ06}, and further developed by \cite{GMZ09, GMZ10}. A similar notion of variability measure was proposed by \cite{FWZ17} with an emphasis on the Gini deviation. We will explain in Section \ref{sec2} the differences between our general definition and the ones given in the literature; in particular, the inter-quantile difference does not satisfy the definition of deviation measure of \cite{RUZ06} due to its lack of convexity.

Our main contribution is a collection of results towards a general theory of variability measures, with particular emphasis on the three parametric classes mentioned above. Various novel properties are studied to underline the special role these measures play among other variability measures.  
Since statistical inference for VaR, ES, and expectiles is well developed (see e.g.~\cite{SW09} for VaR and \cite{KZ17} for the expectiles), the estimation of the corresponding variability measures is quite straightforward.  

The rest of the paper is organized as follows. In the remainder of this section, we introduce some notation. The definitions of the three classes of variability measures induced by VaR, ES, and the expectiles is presented in Section \ref{sec2}, with some basic properties. In Section \ref{sec3}, we summarize many   properties of some common variability measures which are arguably desirable in practice. A characterization result of these measures is established. The stochastic ordering of the three classes of variability measures based on pointwise comparison is discussed in \ref{sec4b}. 
In Section \ref{sec5}, we discuss non-parametric estimation of the three classes of variability measures. We
obtain the asymptotic normality and the asymptotic variances explicitly for the empirical estimators. It may be undesirable and financial unjustifiable to choose the same probability level for the three classes of variability measures induced by VaR, ES, and the expectiles; see \cite{LW19} for a detailed analysis on plausible equivalent probability levels when  ES is to replace  VaR. A simple analysis of a cross-comparison of an equivalent probability level for the variability measures using different distributions is carried out in Section \ref{sec6}. 
A small empirical analysis using the variability measures on the S\&P $500$ index is conducted in Section \ref{sec7}, where we observe the differences between these variability measures during different economic regimes. Further, we explore the symmetric variability orders between log-returns of Facebook and Berkshire Hathaway in 2020. In Section \ref{sec8}, we conclude the paper with some discussions on the suitability of the three classes in different situations. Appendix \ref{app:a1} contains a list of classic variability measures, and proofs of all results are put in Appendix \ref{app:B}.  


\textbf{Notation.} Throughout the paper,   $L^q$ is the set of all random variables in an atomless probability space $(\Omega, \mathcal A, \p)$ with finite $q$-th moment, $q\in (0,\infty)$, and   $L^\infty$ is the set of essentially bounded random variables.
$\X=L^0$ is the set of all random variables,
and $\mathcal M$ is the set of all distributions on $\R$. 
For any $X\in L^0$, $F_X$ represents the distribution function of $X$, $F^{-1}_X$ its left-quantile function, and $U_X$ is a uniform random variable such that $F^{-1}_X(U_X)=X$ almost surely. The existence of such a $U_X$ for any $X$ is given, for example, in Lemma A.32 of \cite{FS16}.
Two random variables $X$ and $Y$ are said to be comonotonic if there exist two increasing functions $f, g : \R \to \R$ such that $X=f(X+Y)$ and $Y=g(X+Y)$.
We write $X\laweq Y$  if  $X$ and $Y$ have the same distribution.  In this paper, the terms ``increasing" and ``decreasing" are meant in the non-strict sense. 

\section{Definitions}\label{sec2}
\subsection{Basic requirements for variability measures}

Generally speaking, a {variability measure} is a functional $\nu:\X \to [0,\infty]$ that quantifies the magnitude of variability of random variables.
In order for our definition to be as general as possible, we only require three natural properties.

\begin{definition}\label{def:1}
A \emph{variability measure} is a functional $\nu:\X \to [0,\infty]$ satisfying the following properties. 
\begin{enumerate}[(A1)]
\item[(A1)] \textit{Law invariance}: if $X,Y\in \X$ and $X\laweq Y$, then $\nu(X)=\nu(Y)$.
\item[(A2)] \textit{Standardization}: $\nu(m)=0$ for all $m\in \R$.
\item[(A3)] \textit{Positive homogeneity}: there exists $\alpha\in [0,\infty)$ such that $\nu(\lambda X)=\lambda^\alpha \nu(X)$ for any $\lambda> 0$ and $X\in \X$. The number $\alpha$ is called the \emph{homogeneity index} of $\nu$.
\end{enumerate}
\end{definition}

The three properties in Definition \ref{def:1} are the most basic, and they are satisfied by virtually all examples in the literature;  a useful variability measure typically satisfies other desirable properties (see Section \ref{sec3}). 
 
Some examples of classic variability measures are given in Appendix \ref{app:a1}. Notice that in the literature there are some relative measures of variability that are only defined for positive random variables, such as the Gini coefficient or the relative deviation (see Appendix \ref{app:a1}). 
In this paper, we do not deal with these cases, although our definition can be easily amended to include them by replacing $\X$ with a positive convex cone. 
We call the set $\X_\nu = \{X\in \X: \nu (X)<\infty\}$ the effective domain of $\nu$.
\begin{remark}
A deviation measure in the sense of \cite{RUZ06} satisfies, in addition to (A2) and (A3) with homogeneity index $1$, also subadditivity and strict positivity for non-constant random variables. As we will see in Section \ref{sec3},
the latter two properties are not satisfied by the inter-quantile difference. For this reason, our more general definition is more suitable here than the one of \cite{RUZ06}. 
 Alternatively, \cite{FWZ17} required location-invariance instead of positive homogeneity, but this property is not satisfied by relative variability measures.  Thus, we identify (A1), (A2), (A3) as the defining properties of a variability measure, and all other properties, such as location invariance and subadditivity, will be additional properties that may or may not be satisfied, as we will discuss in see Section \ref{sec3}.
\end{remark}

\begin{remark}
In applications, we may  choose the domain $\mathcal X$ of a variability measure as a convex cone contained in $L^0$.  For risk measures, the domain plays an essential role, which is often chosen as a general convex cone containing $L^\infty$, because many risk measures cannot be naturally extended to $L^0$; see e.g., \cite{FS12}.
 For variability measures defined on a convex cone $\mathcal X\subset L^0$, since it takes non-negative values (thus, no issues with $\infty-\infty$  which occur for some risk measures), we could always extend the domain by mapping $L^0\setminus \mathcal{X}$ to  $\{\infty\}$ without affecting the properties studied in this paper.
\end{remark}

\subsection{Three one-parameter families of risk measures}\label{sec:22}
Value at Risk (VaR), Expected Shortfall (ES) and expectiles are very popular financial risk measures (see e.g.~\cite{EPRWB14} and \cite{EKT15}).
We recall the basic definitions below. 
\begin{enumerate}[(i)]
\item The right-VaR  (right-quantile): for $p\in (0,1)$, $$Q_p(X)=\inf\{x\in \R: \p(X\le x) > p\}, ~~~X\in \X.$$
 The left-VaR (left-quantile):  for $p\in (0,1)$,  $$Q_p^-(X)=\inf\{x\in \R: \p(X\le x) \ge  p\}, ~~~X\in \X.$$
\item The ES:  for $p\in (0,1)$,  $$\ES_p(X)=\frac{1}{1-p}\int_p^1 Q_r(X) \d r,~~~X\in \X.$$
 The left-ES:  for $p\in (0,1)$,  $$\ES^-_p(X)=\frac{1}{p} \int_0^p Q_r(X) \d r ,~~~X\in \X.$$
\item The expectile:  for $p\in (0,1)$,  $$\ex_p(X)=\min \{x\in \R: p \E[(X-x)_+] \le (1-p)\E[(X-x)_-]\} ,~~~X\in L^1.$$
\end{enumerate}
In the above, $Q_p$ and $Q^{-}_p$ are finite on $L^0$,
while $\ES_p$,  $\ES_p^{-}$ and $\ex_p$ are finite on $L^1$. 
We only define expectiles on $L^1$ since generalizing them beyond $L^1$ is not natural; on the other hand, $\ES$ can be naturally defined on a set larger than $L^1$ by taking possibly infinite values.

\subsection{Three one-parameter families of variability measures}\label{sec:23}

We now introduce the variability measures induced by the aforementioned risk measures, that are the main object of the paper.

\begin{enumerate}[(i)]
\item The inter-quantile difference:  for $p\in [1/2,1)$,
$$
\Delta^Q_p(X) =  Q_p(X)-Q_{1-p}^{-}(X), ~~~X\in \X.
$$
It is obvious that $\Delta^Q_p$ is finite on $\X=L^0$.
\item The inter-ES difference:  for $p\in (0,1)$,
$$
\Delta^{\ES}_p(X) = \ES_p(X) -\ES_{1-p}^{-}(X), ~~~X\in \X.
$$
Here,   $\ES_p$ takes values in $(-\infty,\infty]$,
and $\ES_{1-p}^-$ takes values in $[-\infty,\infty)$, and hence 
the above $\Delta^{\ES}_p$ is well defined on $\X$.  
\item The inter-expectile difference: for $p\in (1/2,1)$,
$$
\Delta^{\ex}_p(X) =\ex_p(X) -\ex_{1-p}(X)  , ~~~X\in L^1,
$$
and we set by definition $\Delta^{\ex}_p(X)=\infty$ for $X\in \X\setminus L^1$.
\end{enumerate}
We consider also the limiting cases
$$ 
\Delta^Q_1(X)  = \Delta^{\ES}_1(X) = \Delta^{\ex}_1(X) =\esssup(X)-\essinf(X),~~~X\in \X,
$$
which is the range functional, and it is simply denoted by $\Delta_1$. 
Both $\Delta^Q_p$ and $\Delta^\ES_p$ belong to the class of  distortion riskmetrics (\cite{WWW20,WWW20b}), with many convenient theoretical properties. 
On the other hand,  $\Delta^\ex_p$ does not belong to this class, but it also has several nice properties, inherited from those of expectiles. 

In Theorems \ref{th:0}-\ref{th:1} and Table \ref{tab:1} below,   the range of $p$  is 
 $p\in [1/2,1)$ for $\Delta^Q_p$, $p\in (1/2,1)$ for $\Delta^{\ex}_p$,
 and $p\in (0,1)$ for $\Delta^{\ES}_p$.
\begin{theorem}\label{th:0} For each $p$, the following statements hold. 
\begin{enumerate}[(i)]
\item  $\Delta^Q_p$, $\Delta^{\ES}_p$, $\Delta^{\ex}_p$ and $\Delta_1$ are variability measures. 
\item  The effective domains of 
 $\Delta^Q_p$,    $\Delta^{\ES}_p$, $\Delta^{\ex}_p$  and $\Delta_1$ are 
 $L^0$, $L^1$, $L^1$, and $L^\infty$, respectively.
 \item Each of $\Delta^Q_p$, $\Delta^{\ES}_p$ and $\Delta^{\ex}_p$ is increasing in $p$. 
 \item For each $X\in \X$, the following alternative formulations hold: 
\begin{align*} 
\Delta^Q_p(X)& = Q_p(X) +Q_p(-X),  
\\\Delta^{\ES}_p(X) &= \ES_p(X) +\ES_p(-X), 
 \\ \Delta^{\ex}_p(X) &= \ex_p(X) +\ex_p(-X).
\end{align*}
\end{enumerate}
\end{theorem}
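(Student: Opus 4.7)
The plan is to prove (iv) first, since the three reflection identities
$Q_p(-X) = -Q^-_{1-p}(X)$, $\ES_p(-X) = -\ES^-_{1-p}(X)$, and $\ex_p(-X) = -\ex_{1-p}(X)$
immediately convert the three inter-differences into the two-sided sums of (iv), and they are also the main tool behind (i) and (iii). For the quantile identity I would start from $\p(-X \leq x) = 1 - \p(X < -x)$ and carefully match the strict/non-strict inequalities in $Q_p$ against those in $Q^-_{1-p}$ under the substitution $y = -x$. The ES identity then follows from the quantile identity by substituting $r \mapsto 1-r$ in the integral $(1-p)^{-1}\int_p^1 Q_r(-X)\,\d r$, using that $Q_r$ and $Q^-_r$ agree Lebesgue-a.e. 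The expectile identity is obtained by substituting $-x$ for $x$ in the condition defining $\ex_p(-X)$; the identities $(-X-x)_\pm = (X+x)_\mp$ swap the roles of $p$ and $1-p$, and the outer $\min$ over $x$ becomes a $-\max$ over $y = -x$ whose argument is exactly $\ex_{1-p}(X)$ by monotonicity of $y \mapsto (1-p)\E[(X-y)_+] - p\E[(X-y)_-]$.

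With (iv) established, (i) and (iii) reduce to componentwise properties of the three risk measures. Law invariance, the standardization $Q_p(m) = \ES_p(m) = \ex_p(m) = m$, and positive homogeneity with index $1$ transfer from $Q_p$, $\ES_p$, $\ex_p$, $\esssup$, $\essinf$ to the differences in the obvious way. The only non-routine ingredient is non-negativity: for $\Delta^Q_p$ with $p \geq 1/2$ it follows from $Q_p \geq Q^-_p \geq Q^-_{1-p}$; for $\Delta^\ex_p$ with $p > 1/2$ from monotonicity of $p \mapsto \ex_p$; and for $\Delta^\ES_p$ on all of $(0,1)$ from the rewriting
\[
(1-p)\,\Delta^\ES_p(X) \;=\; \int_{1-p}^1 Q_r(X)\,\d r \;-\; \int_0^p Q_r(X)\,\d r \;\ge\; 0,
\]
which cancels the overlapping portion of the two averaging intervals and uses that $Q_r$ is increasing. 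Part (iii) uses the same componentwise monotonicities: $p \mapsto Q_p$, $\ES_p$, $\ex_p$ are increasing, and by (iv) $p \mapsto Q^-_{1-p}$, $\ES^-_{1-p}$, $\ex_{1-p}$ are decreasing.

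For (ii), the effective-domain computations are direct: $Q_p$ and $Q^-_{1-p}$ are finite on all of $L^0$; $\ES_p(X) < \infty$ iff $X_+ \in L^1$ and $\ES^-_{1-p}(X) > -\infty$ iff $X_- \in L^1$, pinning the effective domain of $\Delta^\ES_p$ to $L^1$; $\Delta^\ex_p$ has effective domain $L^1$ by the convention adopted in Section~\ref{sec:23}; and $\Delta_1$ is finite iff $X \in L^\infty$. The main technical hurdle I anticipate is the expectile reflection identity in (iv): translating the asymmetric $\min$-with-inequality definition under a sign flip demands a careful check that the transformed constraint coincides with the one defining $\ex_{1-p}(X)$, which in turn relies on monotonicity and continuity of $y \mapsto (1-p)\E[(X-y)_+] - p\E[(X-y)_-]$ to match a $-\max$ with a $\min$. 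A secondary subtlety is the non-negativity of $\Delta^\ES_p$ for $p < 1/2$, where the two ES-averaging intervals overlap and must be cancelled as displayed above.
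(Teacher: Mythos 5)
Your proposal is correct and follows essentially the same route as the paper's proof: establish the reflection identities $Q_p(-X)=-Q^-_{1-p}(X)$, $\ES_p(-X)=-\ES^-_{1-p}(X)$, $\ex_p(-X)=-\ex_{1-p}(X)$ for part (iv) and then deduce (i)--(iii) componentwise from law invariance, $\rho(m)=m$, positive homogeneity, and monotonicity in $p$ of the underlying risk measures. If anything, you are slightly more careful than the paper in explicitly verifying non-negativity (in particular the interval-cancellation argument for $\Delta^{\ES}_p$ when $p<1/2$), which the paper leaves implicit.
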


It is straightforward to check that for $p=1/2$, $\Delta_{p}^{\ES}$ is equal to two times the mean median deviation (see Appendix \ref{app:a1}, item (v)).  
The next proposition shows that it suffices to consider $p\in [1/2,1)$, as we will tacitly assume in most results of the next sections. 
\begin{proposition}\label{prop:eshalf}
For each $p\in (0,1)$,  $(1-p) \Delta_p^{\ES}= p \Delta_{1-p}^{\ES}$, 
and $\Delta_p^{\ES}=\frac{1}{1-p}\int_p^1 \Delta_q^Q \d q.$  \end{proposition}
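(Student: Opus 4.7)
The plan for the first identity is to expand both sides directly from the definitions of $\ES_p$ and $\ES_p^-$ in Section~\ref{sec:22}. After clearing denominators I expect to obtain
\begin{align*}
(1-p)\Delta^{\ES}_p(X) &= \int_p^1 Q_r(X)\,\d r - \int_0^{1-p} Q_r(X)\,\d r,\\
p\,\Delta^{\ES}_{1-p}(X) &= \int_{1-p}^1 Q_r(X)\,\d r - \int_0^{p} Q_r(X)\,\d r.
\end{align*}
Their difference will collapse to $\int_0^1 Q_r(X)\,\d r - \int_0^1 Q_r(X)\,\d r = 0$ via the splittings $[0,1]=[0,p]\cup[p,1]=[0,1-p]\cup[1-p,1]$. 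This establishes the identity on the effective domain $L^1$ identified in Theorem~\ref{th:0}(ii); outside $L^1$ both sides equal $+\infty$ and the identity extends trivially to $\X$.

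For the second identity I would write $\Delta_q^Q(X) = Q_q(X) - Q_{1-q}^-(X)$ and split
$$
\int_p^1 \Delta_q^Q(X)\,\d q = \int_p^1 Q_q(X)\,\d q - \int_p^1 Q_{1-q}^-(X)\,\d q.
$$
The first integral is $(1-p)\ES_p(X)$ directly from the definition of $\ES_p$. In the second, the substitution $r=1-q$ transforms it into $\int_0^{1-p} Q_r^-(X)\,\d r$, which equals $(1-p)\ES_{1-p}^-(X)$ because $Q^-_r$ and $Q_r$ agree for all but the at most countably many $r$ corresponding to jumps of $F_X$. Dividing by $(1-p)$ completes the argument.

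Both parts amount to elementary manipulations of the defining integrals, so I do not anticipate a substantive obstacle. The one mild delicacy to flag is that for $p<1/2$ the expression $\Delta_q^Q$ is invoked at some $q<1/2$ where it was not formally introduced as a variability measure in Section~\ref{sec:23}; this is harmless because the algebraic definition $Q_q - Q_{1-q}^-$ still makes sense, and a short symmetry check gives $\int_p^{1-p} \Delta_q^Q\,\d q = 0$, confirming consistency with the formulation $\frac{1}{1-p}\int_{1-p}^1 \Delta_q^Q\,\d q$ that one would otherwise obtain by first applying the identity $(1-p)\Delta^{\ES}_p = p\,\Delta^{\ES}_{1-p}$.
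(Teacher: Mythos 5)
Your proof is correct and follows essentially the same route as the paper's: both identities come down to elementary manipulations of the integrals $\int Q_r\,\d r$ defining $\ES_p$ and $\ES^-_p$, together with the a.e.\ equality $Q_r = Q_r^-$. The paper merely compresses the computation by first invoking the symmetric formulations $\Delta^{\ES}_p(X)=\ES_p(X)+\ES_p(-X)$ and $\Delta^Q_q(X)=Q_q(X)+Q_q(-X)$ from Theorem~\ref{th:0}(iv), which is equivalent to your explicit interval splitting and substitution $r=1-q$.
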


\section{Comparative properties and characterization}\label{sec3}

In this section, we study comparative advantages of   $\Delta^Q_p$, $\Delta^{\ES}_p$ and $\Delta^{\ex}_p$, among with several other measures of variability, namely the standard deviation (STD), the variance,  the mean absolute deviation (MAD),  the Gini deviation (Gini-D),   and the range; see   Appendix \ref{app:a1} for the definition of these classic variability measures. 

We consider the following additional properties of a variability measure $\nu$, which are all arguably desirable in some situations. In what follows, $\leq_{\rm cx}$ is the convex order, defined by $X\leq_{\rm cx} Y$ if $\E[\phi(X)]\le \E[\phi(Y)]$ for all convex $\phi:\R\rightarrow \R$ such that the above two expectations exist.
\begin{enumerate}[(B1)]
\item  \emph{Relevance}: $\nu(X)>0$ if $X$ is not a constant, and there exists  $\beta\in (0,\infty)$ such that $\nu(X)\le \beta $ for all $X\in \X$ with $|X|\le 1$. 
 \item \emph{Continuity}: $\nu((X\wedge M )\vee (-M))\to \nu(X) $ as $M\to \infty$ for all $X\in \X$. 
\item \emph{Symmetry}: $\nu(X)=\nu(-X)$ for all $X\in \X$. 
\item \emph{Comonotonic additivity (C-additivity)}: $\nu( X+ Y)  = \nu(X) + \nu(Y)$ for all comonotonic $X,Y\in \X$.
\item \emph{Convex order consistency  (Cx-consistency)}: $\nu(X)\le \nu(Y)$ if $X\leq_{\rm cx} Y$. 
\item \emph{Convexity}: $\nu(\lambda X+(1-\lambda)Y) \le \lambda \nu(X) + (1-\lambda)\nu(Y)$ for all $X,Y\in \X$ and $\lambda \in [0,1]$. 
\item \emph{Mixture concavity (M-concavity)}: $\hat \nu$ is concave,
where   $\hat \nu:\mathcal M\to [0,\infty]$
is defined by $\hat \nu(F)=\nu(X)$ for $X\sim F$. 
\item \emph{Location invariance (L-invariance)}: $\nu(X+c)=\nu(X)$ for all $X\in \X$ and $c\in \R$.
\end{enumerate}
Relevance (B1) requires $\nu $  to report a positive value for all non degenerate distributions, 
and the value of $\nu(X)$ should not explode if $|X|\le 1$. 
Continuity (B2)  is very weak and unspecific to the effective domain of $\nu$. If $\nu$ is finite on $L^q$ for some $q\ge 1$, then (B2) is implied by   $L^q$ continuity.
Symmetry (B3) means that $\nu$ is indifferent to the positive and the negative sides of the distribution, and this property is in sharp contrast to  classic risk measures for which positive and negative values are interpreted very differently (deficit/surplus or loss/profit).  
The symmetry property of the measures of variability motivates and simplifies the application of the measures. 
C-additivity (B4) is a convenient functional property which allows for a characterization result below. The properties (B5)-(B7) describe natural requirements for $\nu$ to increase when the underlying distribution is more spread out in some sense; see \cite{WWW20} for further motivation and explanations of these properties.
Finally, (B8) requires that variability is measured independently of the location of the distribution and is indeed imposed as an axiom for measures of variability by \cite{FWZ17}.

In Table \ref{tab:1} below,  $\alpha$ represents the homogeneity index.
 Table \ref{tab:1} shows properties of different variability measures including the inter-quantile, inter-ES, and inter-expectile differences, as well as the aforementioned classic variability measures.

\begin{table}[htbp]
\begin{center} 
\small
\setstretch{1.3}
\setlength{\tabcolsep}{4pt}
\begin{tabular}{C{3.2cm}|C{1.1cm}|C{1.1cm}|C{1.1cm}|C{1.25cm}|C{1.1cm}|C{1.1cm}|C{1.25cm}|C{1.1cm}}
 
                                     & $\Delta_p^{Q}$             & $\Delta_p^{\ES}$           & $\Delta_p^{\ex}$           & variance                   & STD                        & MAD                        & Gini-D 
                                     &range 
                                     \\ \hline
 {relevance}      &  {NO}    &  {YES}   &  {YES}   &  {YES}   &  {YES}   &  {YES}   & YES  &  YES  \\ \hline 
 {continuity}       &  {YES}   &  {YES}   &  {YES}   &  {YES}   &  {YES}   &  {YES}   & YES  &  YES \\ \hline
 {symmetry}       &  {YES}   &  {YES}   &  {YES}   &  {YES}   &  {YES}   &  {YES}   & YES  & YES  \\ \hline  
 {C-additivity}       &  {YES}   &  {YES}   &  {NO}   &  {NO}   &  {NO}   &  {NO}   & YES  & YES \\ \hline
  {Cx-consistency} &  {NO}    &  {YES}   &  {YES}   &  {YES}   &  {YES}   &  {YES}   & YES  & YES  \\ \hline
 {convexity}       &  {NO}   &  {YES}   &  {YES}   &  {YES}   &  {YES}   &  {YES}   & YES  &  YES \\ \hline 
 {M-concavity}    &  {NO}    &  {YES}   &  {NO}    &  {YES}   &  {YES}   &  {NO }   & YES   &  YES \\ \hline
 {L-invariance}       &  {YES}   &  {YES}   &  {YES}   &  {YES}   &  {YES}   &  {YES}   & YES  & YES \\ \hline
 {homogeneity ($\alpha$)}             &  {1}     &  {1}     &  {1}     &  {2}     &  {1}     &  {1}     & 1  &  1  \\ \hline
 {effective domain}         &  {$L^0$} &  {$L^1$} &  {$L^1$} &  {$L^2$} &  {$L^2$} &  {$L^1$} & $L^1$ & $L^\infty$ \\  \hline
\end{tabular}
\end{center}
\caption{Properties of variability measures.}\label{tab:1}
\end{table}

\begin{theorem}\label{th:1}
The statements in  Table \ref{tab:1} hold true. 
\end{theorem}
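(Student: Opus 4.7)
The plan is to verify the table cell-by-cell, but grouped by a small number of unifying principles rather than checking each of the roughly $70$ entries in isolation. The key structural fact is part (iv) of Theorem~\ref{th:0}, which writes each of $\Delta_p^Q, \Delta_p^{\ES}, \Delta_p^{\ex}$ as $\rho(X)+\rho(-X)$ for a translation-invariant, positively homogeneous, law-invariant risk measure $\rho$. From this representation, L-invariance (B8) is immediate because translating $X$ by $c$ translates $-X$ by $-c$; symmetry (B3) is immediate because $(-X,X)$ is a relabeling of $(X,-X)$; the homogeneity index is the same as that of $\rho$ (hence $\alpha=1$ in each case, while variance gives $\alpha=2$); and the effective domains match those of the underlying risk measures, which are $L^0, L^1, L^1$ respectively, giving rows ``homogeneity'' and ``effective domain'' and simultaneously (B3) and (B8) for the three parametric measures. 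The classical rows (variance, STD, MAD, Gini-D, range) are handled by citing their standard representations.

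For the positive entries in (B2), (B4), (B5), (B6), (B7) on the parametric rows, I would use known properties of the underlying $\rho$ together with the representation. Continuity (B2) follows by applying dominated/monotone convergence to the truncation $(X\wedge M)\vee(-M)$ inside $\rho(\cdot)$ and $\rho(-\cdot)$; for $Q_p$ one uses right-continuity of the distribution function. C-additivity (B4) for $\Delta_p^Q$ and $\Delta_p^{\ES}$ follows from comonotonic additivity of $Q_p$ and $\ES_p$, plus the observation that if $X,Y$ are comonotonic then so are $-X,-Y$. Convexity (B6) for $\Delta_p^{\ES}$ and $\Delta_p^{\ex}$ follows from subadditivity plus positive homogeneity of $\ES_p$ and $\ex_p$ (coherence), applied to both $X$ and $-X$. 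For Cx-consistency (B5), I would use that $\ES_p$ and $\ex_p$ are convex law-invariant risk measures, hence preserve $\lcx$, combined with the symmetry $X\lcx Y\iff -X\lcx -Y$. M-concavity (B7) of $\Delta_p^{\ES}$ follows from its representation as a signed Choquet integral with a concave distortion function, as in \cite{WWW20b}; for variance, STD, Gini-D and range, M-concavity is classical.

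The negative entries, where essentially every ``NO'' requires an explicit counterexample, constitute the bulk of the actual work. For $\Delta_p^Q$: relevance fails by taking a two-point distribution $\p(X=0)=p,\ \p(X=1)=1-p$, for which $Q_p(X)=Q_{1-p}^-(X)$ and hence $\Delta_p^Q(X)=0$ though $X$ is nonconstant; Cx-consistency and M-concavity of $\Delta_p^Q$ fail via finitely supported mixtures that flatten or sharpen the tails while keeping the convex order moving in the ``wrong'' direction for the quantile gap. For $\Delta_p^{\ex}$, C-additivity fails by choosing $X=Y$ to be a two-point asymmetric Bernoulli-type variable and directly computing both $\ex_p(X+Y)=\ex_p(2X)=2\ex_p(X)$ versus $\ex_p(X)+\ex_p(Y)=2\ex_p(X)$ on the positive side but checking that the corresponding identity fails once the sum with $-X-Y$ is included (expectiles fail general comonotonic additivity). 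M-concavity of $\Delta_p^{\ex}$ fails using a mixture of two symmetric point masses whose expectiles can be written in closed form from the defining first-order condition. For variance, STD, MAD, C-additivity is refuted already by $X=Y$ nondegenerate, and M-concavity of MAD is refuted by a standard three-atom mixture around the median.

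The main obstacle I expect is the counterexample work for $\Delta_p^{\ex}$, because expectiles lack a quantile representation and are defined implicitly by $p\,\E[(X-x)_+] = (1-p)\,\E[(X-x)_-]$. Producing clean counterexamples to (B4) and (B7) will require choosing discrete distributions small enough that this equation becomes a linear-in-$x$ relation solvable by hand, yet asymmetric enough that the putative additivity or concavity fails strictly; I would aim for two- or three-point distributions and then verify the strict inequality numerically at some convenient $p$, e.g.\ $p=3/4$.
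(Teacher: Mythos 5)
Your handling of the positive entries is essentially sound and parallels the paper's, which reaches the same conclusions by leaning on the distortion-riskmetric results of the cited literature (equivalence of (B5)--(B7) for distortion riskmetrics, convex law-invariant functionals as suprema of convex distortion riskmetrics) rather than on the $\rho(X)+\rho(-X)$ decomposition. The problems are concentrated in the negative entries, which you yourself identify as the bulk of the work but where several of your proposed counterexamples would fail. First, the relevance counterexample for $\Delta_p^Q$ as stated does not work: with $\p(X=0)=p$ and $\p(X=1)=1-p$ you have $\p(X\le 0)=p$, which is not $>p$, so the right quantile is $Q_p(X)=1$ while $Q_{1-p}^-(X)=0$, giving $\Delta_p^Q(X)=1\neq 0$. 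You must push the lower atom strictly past level $p$, e.g.\ $\p(X=0)=p+\epsilon$, which is exactly what the paper does.

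Second, and more seriously, taking $X=Y$ can never refute comonotonic additivity for a functional that is positively homogeneous of index $1$, since $\nu(X+X)=\nu(2X)=2\nu(X)=\nu(X)+\nu(Y)$ identically. This invalidates your proposed refutations of (B4) for STD, for MAD, and for $\Delta_p^{\ex}$; only the variance, with homogeneity index $2$, is killed by $X=Y$. For STD one needs comonotonic $X,Y$ whose correlation is strictly less than $1$. For MAD and $\Delta_p^{\ex}$ the paper avoids direct counterexamples entirely via an indirect argument you are missing: by Theorem \ref{th:2}, properties (B1)--(B5) force the mixture-of-inter-ES representation and hence M-concavity (B7), so a measure satisfying (B1)--(B3) and (B5) but failing (B7) cannot be comonotonically additive. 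This makes the explicit failure of (B7) for $\Delta_p^{\ex}$ (the paper quotes a concrete discrete example from the expectile literature) and for MAD (a Bernoulli mixture) the real load-bearing step, and in your write-up these are only announced as numerical plans, not carried out. Likewise, the three NO entries of $\Delta_p^Q$ in (B5)--(B7) are dispatched in the paper in one stroke from the equivalence of these properties for distortion riskmetrics plus the known non-convexity of the inter-quantile distortion; your route of hand-built mixture counterexamples is viable in principle but none is exhibited. As it stands, the negative half of the table is not established.
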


The proof of Theorem \ref{th:1}, thus checking the properties in Table \ref{tab:1}, relies on several existing results  on properties of risk measures and distortion riskmetrics from \cite{NP87}, \cite{BKMR14, BBP18}, \cite{LCLW20} and \cite{WWW20}.


Notably, the inter-ES difference satisfies all properties (B1)-(B8), along with the Gini deviation  and the range. 
Next,
we establish that any
variability measure  satisfying (B1)-(B8)   admits a representation as a mixture of $\Delta_p^\ES$ for $p\in (0,1]$.

\begin{theorem}\label{th:2}
The following statements are equivalent for a variability measure $\nu:\X\to [0,\infty]$:
\begin{enumerate}[(i)]
\item $\nu$ satisfies (B1)-(B8).
\item $\nu$ satisfies (B1)-(B4) and one of (B5)-(B6).
\item $\nu$ is a   mixture of $\Delta^{\ES}_p$, that is, there exists a  finite Borel measure $\mu \ne 0$ on $(0,1]$ such that 
 \begin{align}\label{eq:repES}
\nu(X) = \int_0^1\Delta^{\ES}_p (X)\d \mu(p),~~X\in \X.
\end{align}
\end{enumerate}

\end{theorem}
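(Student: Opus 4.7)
The plan is to prove the chain (iii) $\Rightarrow$ (i) $\Rightarrow$ (ii) $\Rightarrow$ (iii). For (iii) $\Rightarrow$ (i), each $\Delta^{\ES}_p$ with $p\in(0,1]$ already satisfies (B1)--(B8) by Theorem \ref{th:1}, and each of these properties is stable under nonnegative integration against the finite measure $\mu$. Relevance uses $\mu\neq 0$ for positivity on nonconstant $X$ and finiteness of $\mu$ to provide the uniform upper bound on $\{|X|\leq 1\}$; the remaining properties are immediate from the corresponding properties of $\Delta^{\ES}_p$ and Fubini. The implication (i) $\Rightarrow$ (ii) is trivial.

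For the substantive direction (ii) $\Rightarrow$ (iii), I would first invoke the representation theorem for law-invariant, comonotonic additive functionals from \cite{WWW20}: under (A1)--(A3), (B4), together with the regularity supplied by (B1)--(B2), there is a finite signed Borel measure $\sigma$ on $[0,1]$ with $\sigma([0,1])=0$ such that
\[
\nu(X) \;=\; \int_{[0,1]} Q_u(X)\,\d\sigma(u),\qquad X\in\X,
\]
with the conventions $Q_0(X):=\essinf(X)$ and $Q_1(X):=\esssup(X)$. Symmetry (B3), together with $Q_u(-X)=-Q_{1-u}(X)$ a.e., forces the antisymmetry $\sigma(A)=-\sigma(1-A)$; equivalently the cumulative $\Phi(u):=\sigma([0,u])$ is symmetric about $u=1/2$. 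Under (B1)--(B4), properties (B5) and (B6) are both equivalent, for signed Choquet integrals, to $\Phi$ being convex on $[0,1]$ with $\Phi(0)=\Phi(1)=0$ (standard via \cite{WWW20}), so exactly one additional assumption among (B5)/(B6) suffices.

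The heart of the argument is the Choquet-type decomposition of $\Phi$ as a mixture of the cumulatives $\Phi_p$ associated with $\Delta^{\ES}_p$. Direct calculation gives, for $p\in[1/2,1)$,
\[
\Phi_p(u) \;=\; -\min\!\left(\tfrac{u}{1-p},\,\tfrac{1-u}{1-p},\,1\right),\qquad \Phi_p''=(1-p)^{-1}(\delta_{1-p}+\delta_p),
\]
and $\sigma_1=\delta_1-\delta_0$ for the range component ($p=1$). Given the symmetric nonnegative measure $\eta:=\Phi''$ on $[0,1]$, define a finite Borel measure $\mu$ on $[1/2,1]$ by $\d\mu(p)=(1-p)\,\d\eta(p)\big|_{(1/2,1)}$, together with an atom at $p=1$ absorbing any boundary mass of $\eta$ at $\{0,1\}$. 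Using the symmetry of $\eta$, one verifies $\eta=\int(1-p)^{-1}(\delta_{1-p}+\delta_p)\,\d\mu(p)$; integrating twice under $\Phi(0)=\Phi(1)=0$ yields $\Phi=\int\Phi_p\,\d\mu(p)$, and Fubini applied to the Choquet representation then gives \eqref{eq:repES}.

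The main obstacle is this final extreme-point decomposition: identifying $\{\Phi_p:p\in[1/2,1]\}$ as exactly the extreme rays of the convex cone of symmetric convex $\Phi$ on $[0,1]$ with $\Phi(0)=\Phi(1)=0$, and routing boundary atoms of $\eta$ through the $p=1$ component consistently with the range functional $\Delta_1$. A secondary technical point is justifying the Fubini step in lifting the identity for $\Phi$ back to a pointwise identity for $\nu$ on all of $\X$; this uses finiteness of $\mu$ together with integrability of $u\mapsto Q_u(X)$ against $\sigma$, which in turn follows from $X$ belonging to the effective domain $\X_\nu$.
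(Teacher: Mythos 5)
Your overall architecture for (ii) $\Rightarrow$ (iii) is close to the paper's: both arguments pass through the signed Choquet integral (distortion riskmetric) representation from \cite{WWW20}, use symmetry (B3) to symmetrize the representing measure, and use (B1) and (B2) to force $\mu\neq 0$, identify $\X_\nu\subset L^1$, and extend the identity from bounded random variables to all of $\X$. Where you genuinely diverge is the final step. The paper does not decompose the distortion function by hand: it cites Theorem 3 of \cite{WWW20} to upgrade any one of (B5)--(B7) to convexity of the riskmetric, then Theorem 5 of \cite{WWW20} to write $\nu(X)=\int_0^1\ES_p(X)\,\d\mu_1(p)+\int_0^1\ES_p(-X)\,\d\mu_2(p)$, and finally takes $\mu=(\mu_1+\mu_2)/2$ using (B3). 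Your alternative --- exhibiting $\{\Phi_p\}_{p\in[1/2,1]}$ as the extreme rays of the cone of symmetric convex distortion functions vanishing at the endpoints and integrating $\eta=\Phi''$ against them --- is more self-contained and more illuminating about why the $\Delta_p^{\ES}$ are the building blocks, but it is precisely the step you concede as the ``main obstacle'' and leave as a sketch (note also that an atom of $\eta$ at $u=1/2$ is excluded by restricting $\mu$ to $(1/2,1)$, and $\Phi_{1/2}''=4\delta_{1/2}$ makes the weight there $\eta(\{1/2\})/4$ rather than $(1-p)\,\d\eta$). The paper's citation short-circuits all of this.

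There is one genuine logical gap at the entry point. You invoke the Choquet representation under (A1)--(A3), (B4) ``together with the regularity supplied by (B1)--(B2).'' That is not sufficient: Theorem 1 of \cite{WWW20} needs uniform continuity with respect to the supremum norm, and relevance plus truncation continuity do not deliver it --- law-invariant comonotonic-additive functionals without such continuity need not be integral representable. The paper closes this by first proving subadditivity, either from (B4) together with Cx-consistency (B5) via the comonotonic rearrangement bound $X+Y\lcx X'+Y'$, or from convexity (B6) combined with homogeneity index $1$ (itself extracted from (B4)); subadditivity and (B1) then give $|\nu(Y)-\nu(X)|\le\beta\Vert X-Y\Vert_\infty$, which is the continuity the representation theorem requires. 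So the hypothesis from (B5)/(B6) must be consumed \emph{before} the representation is invoked, not only afterwards to obtain convexity of $\Phi$. Once the argument is reordered in this way, and the extreme-ray decomposition is either completed or replaced by the citation the paper uses, your proof goes through.
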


The measure $\mu$ in \eqref{eq:repES} for a given $\nu$ is generally not unique. 
Using Proposition \ref{prop:eshalf}, we can require $\mu$ in \eqref{eq:repES} to be   supported on $[1/2,1]$ instead of $(0,1]$.

\begin{example}\label{ex:2}
There are three variability measures in Table \ref{tab:1} that satisfy all of (B1)-(B8), and each admit a representation as in Theorem \ref{th:2}. We give below a corresponding measure $\mu$ for each of them.
\begin{enumerate}
\item $\Delta_p^\ES$ for $p\in (0,1)$: $\mu=\delta_p$.  
\item The Gini deviation:   $  \mu(\d x)   = (1-x) \d x $ on $[0,1]$.  
\item The range $\Delta_1$: $\mu=\delta_1$.
\end{enumerate}

\end{example}
As we have seen from Theorem \ref{th:1}, all of  $\Delta_p^Q$, 
$\Delta_p^\ES$, $\Delta_p^\ex$ are invariant under location shifts.
In the next result, we show that each of the one-parameter families  $\Delta_p^Q$, 
$\Delta_p^\ES$, $\Delta_p^\ex$ characterize a symmetric distribution up to location shifts. 
\begin{proposition}\label{th:4}
Suppose that $X$ has a symmetric distribution, i.e., $X \laweq -X$.  
Each of the curves 
$
p\mapsto \Delta_p^Q (X)
$,
$
p\mapsto \Delta_p^\ES (X)
$ and
$
p\mapsto \Delta_p^\ex (X)
$  for $p\in (1/2,1)$, if it is finite, 
 uniquely determines the distribution of $X$. 
\end{proposition}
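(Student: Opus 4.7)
\medskip

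The plan is to exploit symmetry to reduce each of the three curves to a multiple of the underlying parametric risk measure, and then to invoke the fact that each of the three families $\{Q_p\}_{p\in(1/2,1)}$, $\{\ES_p\}_{p\in(1/2,1)}$ and $\{\ex_p\}_{p\in(1/2,1)}$ characterizes a symmetric distribution.

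First, I would establish the following three identities valid whenever $X\laweq -X$:
\begin{align*}
\Delta_p^Q(X) &= 2\, Q_p(X), \\
\Delta_p^{\ES}(X) &= 2\,\ES_p(X), \\
\Delta_p^{\ex}(X) &= 2\,\ex_p(X),
\end{align*}
for $p\in(1/2,1)$ (and the first two for $p\in[1/2,1)$). Each follows from Theorem~\ref{th:0}(iv) together with the symmetry hypothesis: $X\laweq -X$ yields $Q_p(X)=Q_p(-X)$, $\ES_p(X)=\ES_p(-X)$, and $\ex_p(X)=\ex_p(-X)$, and the identity $\rho(X)+\rho(-X)$ from Theorem~\ref{th:0}(iv) thus collapses to $2\rho(X)$ for each of $\rho\in\{Q_p,\ES_p,\ex_p\}$.

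Next I would show each of the three resulting curves determines the distribution of $X$. For $Q_p(X)$ on $(1/2,1)$: a symmetric distribution is fully determined by its right-quantile function on $(1/2,1)$, since the left-quantile on $(0,1/2)$ is recovered by $Q_p^-(X)=-Q_{1-p}(X)$ (using $X\laweq -X$), and the quantile function determines the distribution. For $\ES_p(X)$ on $(1/2,1)$: writing $(1-p)\ES_p(X)=\int_p^1 Q_r(X)\,\d r$, the map $p\mapsto (1-p)\ES_p(X)$ is absolutely continuous in $p$, so differentiation a.e.\ recovers $Q_p(X)$ on $(1/2,1)$, and we are reduced to the previous case. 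For $\ex_p(X)$: the map $p\mapsto \ex_p(X)$ on $(0,1)$ is known to characterize the distribution of $X\in L^1$ (see, e.g., \cite{BKMR14}); by symmetry, $\ex_{1-p}(X)=-\ex_p(X)$, so the values on $(1/2,1)$ already determine the full expectile curve, hence the distribution.

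The three parts are essentially the same in structure; the only mild technicalities are (i) checking the quantile symmetry relation $Q_p^-(X)=-Q_{1-p}(X)$ carefully (which holds everywhere except at the countably many atoms of $F_X$, and hence is enough to recover the distribution), and (ii) justifying that $\int_p^1 Q_r(X)\,\d r$ is differentiable a.e.\ with derivative $-Q_p(X)$, which is standard since $Q_r(X)$ is increasing in $r$. I do not expect any real obstacle: the content of the proposition is essentially the well-known fact that each of VaR, ES, and the expectile functional, viewed as a function of its probability level, characterizes the underlying law, combined with the algebraic simplification provided by symmetry.
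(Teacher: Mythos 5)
Your proposal is correct and follows essentially the same route as the paper: symmetry plus Theorem~\ref{th:0}(iv) collapses each difference to twice the underlying risk measure, the quantile and ES cases are handled by recovering the quantile function (a.e.\ differentiation of $p\mapsto(1-p)\ES_p(X)$ in the ES case), and the expectile case reduces to the fact that the expectile curve characterizes the law. The only difference is that the paper proves this last fact inline, via the identity $\ex_p(X)=\E[X]+\frac{2p-1}{1-p}\E[(X-\ex_p(X))_+]$ and the resulting equality of stop-loss transforms, where you cite it from the literature.
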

\begin{remark}
If the distribution of $X$ is not symmetric, none of $
p\mapsto \Delta_p^Q (X)
$,
$
p\mapsto \Delta_p^\ES (X)
$ and
$
p\mapsto \Delta_p^\ex (X)
$  for $p\in (1/2,1)$
determines its distribution up to location shifts.  
This is because the inter-quantile difference curve $p\mapsto Q_p-Q^-_{1-p}$  does not determine the quantile curve $p\mapsto Q_p$.
For instance, 
given a  quantile curve $p\mapsto  Q_p(X)$, we can define another  quantile curve  $p\mapsto  Q_p(Y)$  by
$$
Q_p(Y) = Q_p(X) + f(p),~~p\in (0,1),
$$
where $f(p)$ is any  continuous function satisfying $f(p)=f(1-p)$ for $p\in (0,1/2)$,
such that $Q_p(Y)$ is increasing in $p$. 
The inter-quantile difference curves of $X$ and $Y$ are the same, but the distributions of $X$ and $Y$ are not the same up to a location shift unless $f$ is a constant. 
\end{remark}
\begin{remark}
From \cite{K01} it is well known that any coherent risk measure admits a representation as a supremum of mixtures of ES; see \cite{BKMR14} for the case of expectiles. One naturally wonders whether an inter-expectile difference can be represented as the supremum of mixtures of inter-ES differences, i.e., the supremum over functions of form \eqref{eq:repES}. Rather surprisingly, it turns out that such a relationship does not hold in general, as illustrated by Example \ref{ex} in Section 4.
\end{remark}

\section{Symmetric variability orders} \label{sec4b}

Since the variability measures can be easily estimated from real data (see Section \ref{sec5} below), one may conclude some ordering relations between two data sets with ordered measures of variability. For this purpose, we consider stochastic orders induced by pointwise comparison of inter-quantile, inter-ES, and inter-expectile differences. The first case has been studied in \cite{TC05} under the name of quantile spread order, defined as follows: 
$$
X \leq_{\rm QS} Y 
\text{ if } \Delta_p^Q(X) \leq \Delta_p^Q(Y), \text { for each } p \in (1/2,1).
$$
Note that the order $\leq_{\rm QS}$ is weaker than the well-known dispersive order, defined by 
$$
X \leq_{\rm disp} Y 
\text{ if } Q_\beta(X)-Q_\alpha(X) \leq Q_\beta(Y)-Q_\alpha(Y), \text { for each } 0<\alpha < \beta <1,
$$
for which we refer e.g.~to \cite{MS02} and \cite{SS07}. 
We define two stochastic orders based on inter-ES and inter-expectile differences as follows:
\begin{align*}
X \leq_{\rm \Delta{\text -}\ES} Y 
&\text{ if } \Delta_p^{\ES}(X) \leq \Delta_p^{\ES}(Y), \text { for each } p \in (1/2,1), \\
X \leq_{\rm \Delta{\text -}\ex} Y 
&\text{ if } \Delta_p^{\ex}(X) \leq \Delta_p^{\ex}(Y), \text { for each } p \in (1/2,1).
\end{align*}
It turns out that for symmetric random variables, these orders are equivalent to the dilation order $\leq_{\rm dil}$, defined by 
$$
X \leq_{\rm dil} Y \text { if } X-\E[X] \leq_{\rm cx} Y-\E[Y],  
$$
as shown in (v) and (vi) below; the other properties are summarized in the following. 
\begin{proposition}\label{th:orders} Let $X, Y \in L^1$. The following statements hold: 
\begin{enumerate}[(i)]
\item  For any $c \in \R$, $X \leq_{\rm \Delta{\text -}\ES} Y \iff X + c \leq_{\rm \Delta{\text -}\ES} Y$;  $X \leq_{\rm \Delta{\text -}\ex} Y \iff X + c \leq_{\rm \Delta{\text -}\ex} Y$.
\item If $|a| \geq 1$, then $X \leq_{\rm \Delta{\text -}\ES} aX$ and $X \leq_{\rm \Delta{\text -}\ex} aX$;
\item $X \leq_{\rm \Delta{\text -}\ES} Y \iff X \leq_{\rm \Delta{\text -}\ES} -Y$; $X \leq_{\rm \Delta{\text -}\ex} Y \iff X \leq_{\rm \Delta{\text -}\ex} -Y$.
\item $X \leq_{\rm QS} Y \Longrightarrow X \leq_{\rm \Delta{\text -}\ES} Y$.
\item $X \leq_{\rm dil} Y \Longrightarrow X \leq_{\rm \Delta{\text -}\ES} Y$ and $X \leq_{\rm \Delta{\text -}\ex} Y$.
\item  If $X$ and $Y$ are symmetric with respect to their means, then 
$X \leq_{\rm dil} Y \Longleftrightarrow {X \leq_{\rm \Delta{\text -}\ES} Y} $ $ \Longleftrightarrow X \leq_{\rm \Delta{\text -}\ex} Y$.
\end{enumerate}
\end{proposition}
In case $X$ or $Y$ is not symmetric, then the equivalence relations in (vi) may fail, as the following simple example shows.  Therefore, the two new orders  $\leq_{\rm \Delta{\text -}\ES}$ and $\leq_{\rm \Delta{\text -}\ex}$  are generally weaker than the dilation order. This provides  more flexibility for these new orders in real-data applications, as we will illustrate in Section \ref{sec7}.
\begin{example}
Let 
$$
X = 
\begin{cases}
-3/4 &\text{with prob. } 1/3\\
-1/2 &\text{with prob. } 1/3\\
\phantom{-}5/4 &\text{with prob. } 1/3\\
\end{cases}
~~~\mbox{and}~~~
Y = 
\begin{cases}
-1 &\text{with prob. } 1/2\\
\phantom{-}1 &\text{with prob. } 1/2.\\
\end{cases}
$$
Then $\E[X]=\E[Y]=0$, and 
\begin{align*}
\Delta_p^{\ES}(X) &=
\begin{cases}
2 & 2/3 \leq p \leq 1\\
\frac{2}{3(1-p)} & 1/2 \leq p \leq 2/3,
\end{cases}
~~~\mbox{and}~~~
\Delta_p^{\ES}(Y) =2, \, ~~1/2 \leq p \leq 1.
\end{align*}
Hence, $\Delta_p^{\ES}(X) \leq \Delta_p^{\ES}(Y)$ for each $p \in [1/2,1]$. Also, by a straightforward computation,
\begin{align*}
\ex_p(X) &= 
\begin{cases}
\frac{6p-3}{4+4p}  &\text { if } 0 \leq p \leq 1/8 \\
\frac{10p-5}{8-4p} &\text { if } 1/8 \leq p \leq 1, \\
\end{cases}
~~~\mbox{and}~~~
\ex_p(Y)  = 2p-1 \text{ for } 0 \leq p \leq 1;\\
\Delta_p^{\ex}(X) &=
\begin{cases}
\frac{30p-15}{4(2-p)(1+p)} &\text { if } 1/2 \leq p \leq 7/8 \\
\frac{4p-2}{2-p} &\text { if } 7/8 \leq p \leq 1,\\
\end{cases}
~~~\mbox{and}~~~
\Delta_p^{\ex}(Y)=4p-2 \text{ for } 1/2 \leq p \leq 1.
\end{align*}
It follows  that   $\Delta_p^{\ex}(X) \leq \Delta_p^{\ex}(Y)$ for each $p \in [1/2,1]$.
However, $X \not\leq_{\rm dil} Y$ because $X$ and $Y$ have the same mean, and the support of $X$ is not contained in that of $Y$.  This shows that $\leq_{\Delta{\text -}\ES}$ and $\leq_{\Delta{\text -}\ex}$ do not imply $\leq_{\rm dil}$.
\end{example}
Finally, in the asymmetric case the $\leq_{\Delta{\text -}\ES}$ and $\leq_{\Delta{\text -}\ex}$ orders are not related. In the next example we have that $X \leq_{\Delta{\text -}\ES} Y$ but $X \not \leq_{\Delta{\text -}\ex} Y$, and a (real-data) example in which the opposite situation occurs can be found in Section \ref{sec7}.
\begin{example}\label{ex}
Let
$$
X = 
\begin{cases}
-1 &\text{with prob. } 1/4\\
\phantom{-} 1 &\text{with prob. } 3/4\\
\end{cases}
~~~\mbox{and}~~~
Y = 
\begin{cases}
-1 &\text{with prob. } 1/4\\
\phantom{-}0 &\text{with prob. } 1/4\\
\phantom{-}1 &\text{with prob. } 1/2.\\
\end{cases}
$$
Then
\begin{align*}
\Delta_p^{\ES}(X) &=
\begin{cases}
2 & 3/4 \leq p \leq 1\\
\frac{1}{2(1-p)} & 1/2 \leq p \leq 3/4,
\end{cases}
~~~\mbox{and}~~~
\Delta_p^{\ES}(Y) &=
\begin{cases}
2 & 3/4 \leq p \leq 1\\
1+\frac{1}{4(1-p)} & 1/2 \leq p \leq 3/4.
\end{cases}
\end{align*}
Hence $\Delta_p^{\ES}(X) \leq \Delta_p^{\ES}(Y)$ for each $p \in [1/2,1]$ and $X \leq_{\Delta{\text -}\ES} Y$. Also, 
\begin{align*}
\ex_p(X)  &= \frac{4p-1}{1+2p}\text{ for } 0 \leq p \leq 1
~~~\mbox{and}~~~
\ex_p(Y) = 
\begin{cases}
\frac{3p-1}{2p+1}  &\text { if } 0 \leq p \leq 1/3 \\
\frac{3p-1}{2} &\text { if } 1/3 \leq p \leq 1, \\
\end{cases}
\\
\Delta_p^{\ex}(X) &=\frac{12p-6}{(1+2p)(3-2p)} \text{ for } 1/2 \leq p \leq 1
~~\mbox{and}~~
\Delta_p^{\ex}(Y) =
\begin{cases}
\frac{-6p^2+17p-7}{6-4p} &\text { if } 2/3 \leq p \leq 1\\
\frac{6p-3}{2} &\text { if } 1/2 \leq p \leq 2/3. \\
\end{cases}
\end{align*}
Since $\Delta_{2/3}^{\ex}(X)=\frac{18}{35}>\frac{1}{2}=\Delta_{2/3}^{\ex}(Y)$, it follows that 
$X \not \leq_{\Delta{\text -}\ex} Y$. 
\end{example}
\section{Non-parametric estimators}\label{sec5}

The properties of non-parametric estimators  of 
 $ \Delta_p^Q (X)
$,
$ \Delta_p^\ES (X)
$ and
$ \Delta_p^\ex (X)$
can be derived from those of  VaR, ES and expectiles, as we will explain in this section.

Suppose $X_1,X_2,\dots, X_n \in L^1 $ is an iid sample from a random variable $X$.
Recall that the empirical distribution $\widehat F_n$ of $X_1,\dots,X_n$ is given by 
$$
\widehat F_n(x) = \frac 1 n \sum_{j=1}^n \id_{ \{X_j\le x\}},~~x\in \R.
$$
Let $\widehat\Delta_p^Q(n)$ be the empirical estimator of $\Delta_p^Q(X)$, obtained by applying $\Delta_p^Q$ to the empirical distribution of $X_1,\dots,X_n$. 
Similarly, let $\widehat\Delta_p^\ES (n)$ and $\widehat\Delta_p^\ex (n)$ be the empirical estimators of  $\Delta_p^\ES (X)$ and $\Delta_p^\ex (X)$.
We will establish  consistency and  asymptotic normality of the empirical estimators, based on   corresponding results on empirical estimators of VaR, ES and expectiles in the literature, e.g., \cite{CT05}, \cite{C08}, and \cite{KZ17}. 
We make the following standard regularity assumption on the distribution of the random variable $X$.
\begin{enumerate}
\item[(R)]   The distribution $F$ of $X\in L^1$ is supported on a convex set and has a positive density function $f$ on the support.
\end{enumerate} 
Denote by $g=f\circ{F^{-1}}$ and let $p\in (1/2,1)$. We will show in the next theorem that the asymptotic variances of the empirical estimators for $\Delta^Q_p$ and $\Delta^\ES_p$ are given by, respectively,  \begin{align}
\sigma_Q^2&= \frac{  p(1-p)}{(g(p))^2} + \frac{  p(1-p)}{(g(1-p))^2}    -2 \frac{(1-p)^2}{g(p)g(1-p)},\label{eq:asymQ}\\  
\sigma_\ES^2& =   \frac{1}{(1-p)^2}  \left(\int_{[p,1]^2\cup [0,1-p]^2} -2 \int_{[p,1]\times [0,1-p]}  \right) \frac{s\wedge t-st}{g(t)g(s)}\d t \d s  , \label{eq:asymES}\end{align} 
and that for $\Delta^\ex_p$ is given by 
\begin{align}
 \sigma_\ex^2&  =  s_p^\ex + s_{1-p}^\ex - 2 c_p^{\ex}, \label{eq:asymex} 
\end{align} 
 where for $r\in\{p,1-p\}$,
      \begin{align*}
 f^\ex_{r,F}(t)&= \frac{ (1-r) \mathbb \id_{\{t\le \ex_r(X)\}}  +
r \mathbb \id_{\{t>\ex_r(X)\}}}{(1-2r)F(\ex_r(X))+ r},~~~~t\in \R,\\
    s_r^\ex &= \int_{-\infty}^\infty\int_{-\infty}^\infty f^\ex_{r,F}(t)f^\ex_{r,F}(s)F(t \wedge s)(1-F(t \vee s))\d t\d s,\\
c _r^{\ex} &= \int_{-\infty}^\infty\int_{-\infty}^\infty f^{\ex}_{r,F}(t)  f^{\ex}_{1-r,F}(s)F(t \wedge s)(1-F(t \vee s))\d t \d s.   \end{align*} 
\begin{theorem}\label{th:5}
Suppose that  $p\in (1/2,1)$ and Assumption (R) holds.
\begin{enumerate}[(i)] 
\item $ \widehat \Delta_p^Q (n)\pto \Delta_p^Q (X)
$,
$\widehat\Delta_p^\ES (n) \pto \Delta_p^\ES (X)
$ and
$\widehat\Delta_p^\ex (n)  \pto \Delta_p^\ex (X)$ 
as $n\to \infty$.
\item If $X\in L^{2+\delta}$ for some $\delta>0$, then
 \begin{align*}
  \sqrt{n} (\widehat \Delta_p^Q (n) - \Delta_p^Q (X))&  \dto \mathrm{N}(0,\sigma^2_Q),
\\  \sqrt{n} (\widehat \Delta_p^\ES (n) - \Delta_p^\ES (X))& \dto \mathrm{N}(0,\sigma^2_\ES),
 \\
  \sqrt{n} (\widehat \Delta_p^\ex (n) - \Delta_p^\ex (X)) &\dto \mathrm{N}(0,\sigma^2_\ex),\end{align*}
where $ \sigma^2_Q$, $\sigma^2_\ES$ and $\sigma^2_\ex$ are given in \eqref{eq:asymQ}, \eqref{eq:asymES} and \eqref{eq:asymex}, respectively.  
\end{enumerate}
\end{theorem}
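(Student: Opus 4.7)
The plan is to reduce each of the three limit statements to the well-developed asymptotic theory for the underlying risk-measure estimators $\widehat Q_p$, $\widehat\ES_p$, $\widehat\ex_p$ (and their level-$(1-p)$ counterparts), and then to pass from the joint distribution of each pair to the distribution of the difference.

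For part (i), consistency of each component estimator is standard under (R): $\widehat Q_p(n)\pto Q_p(X)$ is Glivenko--Cantelli plus continuity of $F^{-1}$ at $p$; $\widehat\ES_p(n)\pto \ES_p(X)$ follows from uniform convergence of the empirical quantile function on $[p,1]$ (cf.\ \cite{C08}); and $\widehat\ex_p(n)\pto \ex_p(X)$ is the M-estimator consistency of \cite{KZ17}. Analogous statements hold at level $1-p$, and the continuous mapping theorem applied to the difference gives (i).

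For part (ii), I would establish joint asymptotic normality of each pair $(\widehat Q_p,\widehat Q_{1-p}^-)$, $(\widehat\ES_p,\widehat\ES_{1-p}^-)$, $(\widehat\ex_p,\widehat\ex_{1-p})$ and compute the variance of the difference as $\sigma_1^2+\sigma_2^2-2\sigma_{12}$. For the quantile pair, the classical joint CLT for empirical quantiles yields an asymptotic covariance matrix whose entries come from $\cov(\widehat Q_s,\widehat Q_t)\to (s\wedge t - st)/(g(s)g(t))$ at $s=1-p<p=t$; the delta method on $(a,b)\mapsto a-b$ produces $\sigma_Q^2$. For the ES pair, I would apply the standard linearization (see \cite{C08})
\begin{equation*}
\sqrt n(\widehat\ES_p(n)-\ES_p(X)) = \frac{1}{\sqrt n(1-p)}\sum_{i=1}^n\Bigl((X_i-Q_p(X))_+ - \E[(X-Q_p(X))_+]\Bigr) + o_P(1),
\end{equation*}
and the analogous one for $\widehat\ES_{1-p}^-$ involving $(Q_{1-p}(X)-X_i)_+$ (with an overall sign flip); the $L^{2+\delta}$ moment secures the remainder bound, and the CLT applied to the leading sum of the difference gives asymptotic variance $\var\bigl(((X-Q_p)_+ + (Q_{1-p}-X)_+)/(1-p)\bigr)$. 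Rewriting $(X-Q_p)_+ = \int_p^1 \id_{\{X>Q_r(X)\}}\,dr/g(r)$ and $(Q_{1-p}-X)_+ = \int_0^{1-p} \id_{\{X\leq Q_r(X)\}}\,dr/g(r)$ via the change of variable $y=Q_r(X)$, and using $\cov(\id_{\{X\leq Q_s\}},\id_{\{X\leq Q_t\}}) = s\wedge t - st$ together with Fubini, converts this variance into the double-integral formula $\sigma_\ES^2$ in \eqref{eq:asymES}. For the expectile pair I would apply the linearization of \cite{KZ17},
\begin{equation*}
\sqrt n(\widehat\ex_r(n)-\ex_r(X)) = \frac{1}{\sqrt n}\sum_{i=1}^n \phi_r(X_i) + o_P(1), \qquad r\in\{p,1-p\},
\end{equation*}
with influence function $\phi_r(x) = [r(x-\ex_r)_+ - (1-r)(\ex_r-x)_+]/D_r$ and $D_r = (1-2r)F(\ex_r)+r$. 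An integration by parts --- using the first-order condition $r\E[(X-\ex_r)_+] = (1-r)\E[(\ex_r-X)_+]$ to kill the resulting additive constant --- rewrites $\phi_r(x) = -\int f^\ex_{r,F}(t)(\id_{\{x\le t\}}-F(t))\,dt$. Combining this with the identity $F(t\wedge s)(1-F(t\vee s)) = \cov(\id_{\{X\le t\}},\id_{\{X\le s\}})$ and Fubini turns $\var(\phi_p(X)-\phi_{1-p}(X))$ into $s_p^\ex + s_{1-p}^\ex - 2 c_p^\ex = \sigma_\ex^2$, and the CLT on the linearized sum delivers the stated limit.

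The step I expect to be the main obstacle is justifying the ES linearization with the required $o_P(1)$ remainder under only the $L^{2+\delta}$ assumption --- this is delicate in the tails and is where the moment condition enters substantively. A secondary technical step is converting the expectile influence function into the integral form involving $f^\ex_{r,F}$: the integration by parts must correctly absorb the boundary/centering constant using the expectile first-order condition, and one must verify that the resulting expression $\var(\phi_p(X)-\phi_{1-p}(X))$ matches the specific $s_r^\ex$ and $c_p^\ex$ written in the statement.
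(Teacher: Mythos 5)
Your proposal is correct and follows the same overall strategy as the paper: reduce each statement to joint asymptotics of the two component risk-measure estimators and compute the variance of the difference via the covariance structure of the empirical process. The one place you genuinely diverge is the inter-ES part: the paper obtains $\sigma^2_\ES$ by integrating the weak limit $B_s/g(s)$ of the quantile process over $[p,1]$ and $[0,1-p]$ and using $\cov(B_s,B_t)=s\wedge t-st$, whereas you linearize $\widehat\ES_p$ via its influence function $\tfrac{1}{1-p}\bigl((x-Q_p)_+-\E(X-Q_p)_+\bigr)$ and then recover the double-integral formula through the substitution $(x-Q_p)_+=\int_p^1\id_{\{x>Q_r\}}\,\mathrm dr/g(r)$; both paths are standard and your covariance bookkeeping (including the identity $F(t\wedge s)(1-F(t\vee s))=\cov(\id_{\{X\le t\}},\id_{\{X\le s\}})$ and the use of the expectile first-order condition to absorb the centering constant, which I checked reproduces $\phi_r$ exactly) lands on the same $\sigma^2_\ES$ and $\sigma^2_\ex$. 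Your approach carries the burden of justifying the $o_P(1)$ remainder in the ES linearization under $L^{2+\delta}$, while the paper's carries the analogous burden of tail control for the weighted quantile process near $1$; the paper simply delegates the expectile case to Theorem 3.2 of Kr\"atschmer--Z\"ahle and says ``similar arguments,'' so your explicit influence-function treatment of the cross term $c_p^{\ex}$ is, if anything, more complete than what the paper writes down.
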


Simulation results are presented in Figure \ref{fig:1} for $p=0.9$ in the case of standard normal and Pareto risks with tail index $4$, that confirm the asymptotic normality of  the empirical estimators in Theorem \ref{th:5}.  
More general asymptotic results for $\alpha$-mixing processes could be similarly established using results in \cite{C08} and \cite{KZ17}. For the sake of space we do not discuss here the case of dependent observations.

\begin{figure}[htbp]
	\centering
    \begin{subfigure}{.48\textwidth}
		\centering
		\includegraphics[scale=0.6]{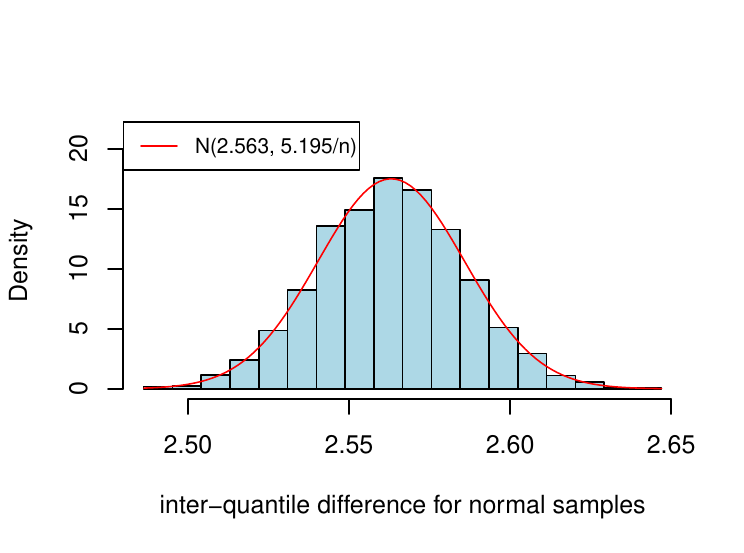}
		\caption{Histogram of $\widehat \Delta_p^Q (n)$ for $\mathrm N(0,1)$.}
		\label{fig1c}
	\end{subfigure}
	\begin{subfigure}{.48\textwidth}
		\centering 
		\includegraphics[scale=0.6]{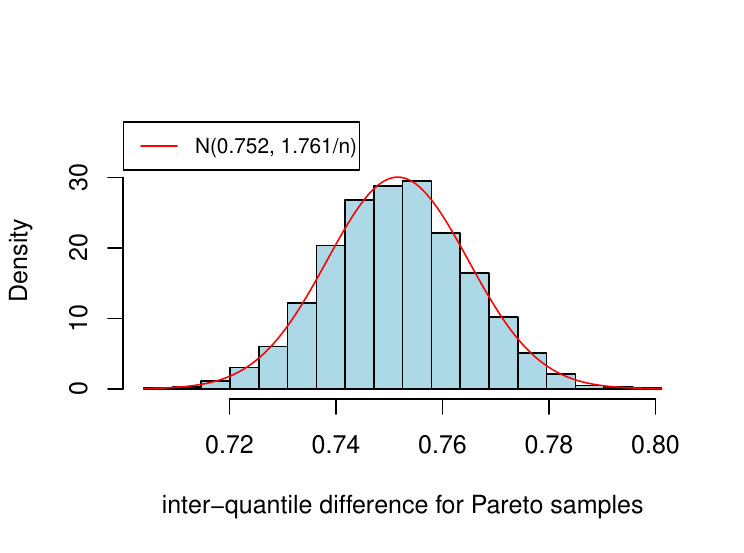}
		\caption{Histogram of $\widehat \Delta_p^Q (n)$ for Pareto($4$).}
		\label{fig1d}
	\end{subfigure}
	\centering
    \begin{subfigure}{.48\textwidth}
		\centering
		\includegraphics[scale=0.6]{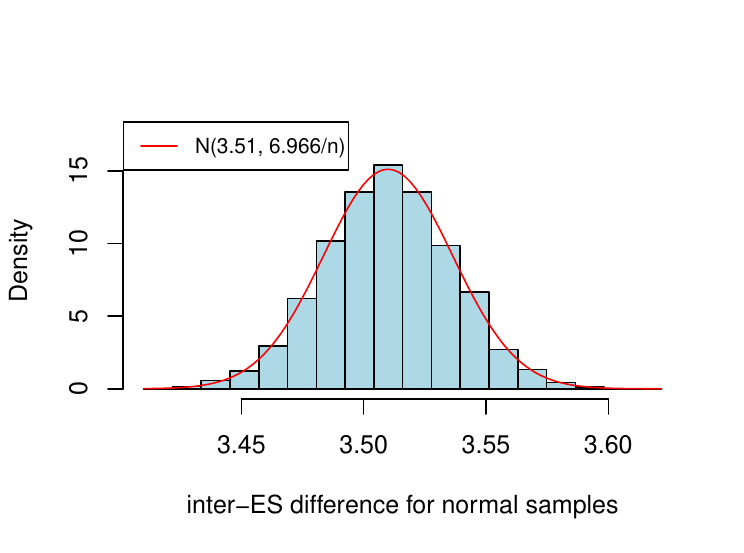}
		\caption{Histogram of $\widehat \Delta_p^{\ES} (n)$ for $\mathrm N(0,1)$.}
		\label{fig1c}
	\end{subfigure}
	\begin{subfigure}{.48\textwidth}
		\centering 
		\includegraphics[scale=0.6]{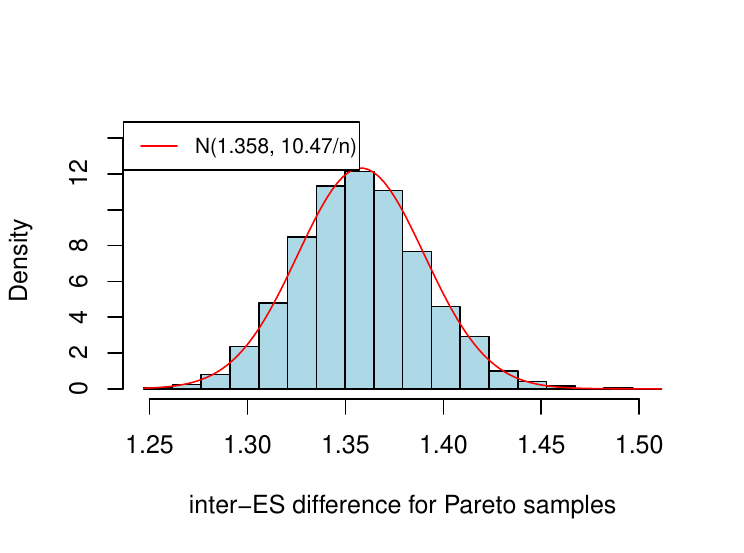}
		\caption{Histogram of $\widehat\Delta_p^{\ES}(n)$ for Pareto($4$).}
		\label{fig1d}
	\end{subfigure}
	\centering
    \begin{subfigure}{.48\textwidth}
		\centering
		\includegraphics[scale=0.6]{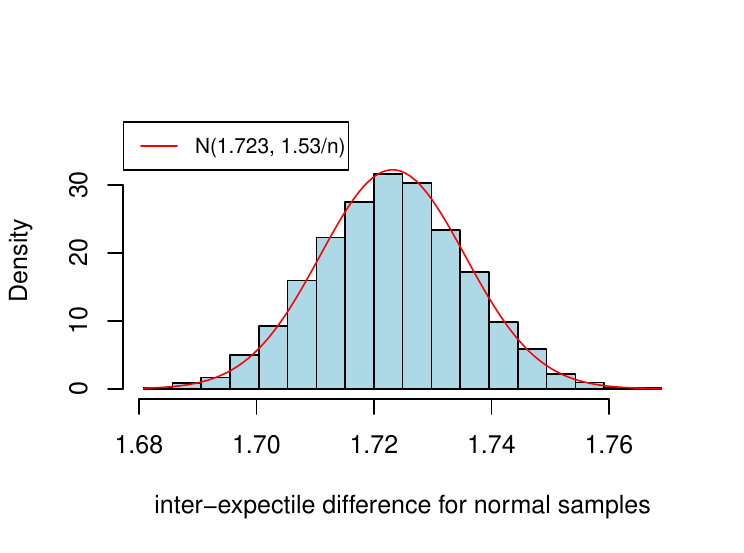}
		\caption{Histogram of $\widehat \Delta_p^{\rm ex} (n)$ for $\mathrm N(0,1)$.}
		\label{fig1c}
	\end{subfigure}
	\begin{subfigure}{.48\textwidth}
		\centering 
		\includegraphics[scale=0.6]{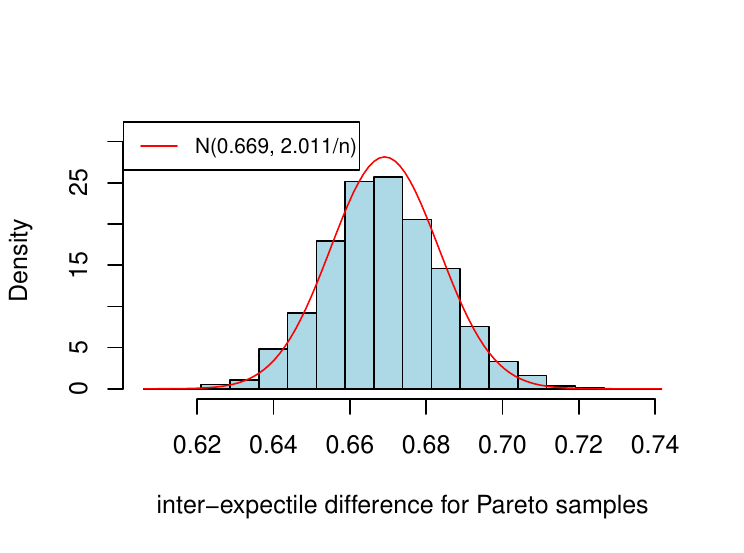}
		\caption{Histogram of $\widehat\Delta_p^{\rm ex}(n)$ for Pareto($4$).}
		\label{fig1d}
	\end{subfigure}
\caption{Histograms of empirical  estimators for simulated normal and Pareto risks, plotted against the  density of their asymptotic normal distributions in Theorem \ref{th:5} (with variance normalized by the sample size $n$). Each histogram is computed from 5,000 replications with sample size $n=10,000$. The parameter $p$ is set to $0.9$ in all simulation experiments.}	
\label{fig:1}
 \end{figure}
 
 \begin{remark}
For part (i) of Theorem \ref{th:5}, the assumption (R) is used to guarantee that the empirical quantiles converge to the true quantile (more precisely, we only need the quantile function to be continuous at $p$ and $1-p$); this is not needed for the consistency statements on $\Delta_p^{\rm ES}$ and $\Delta_p^{\rm ex}$ in part (i).  
 \end{remark}

  \begin{remark}
  For a convex risk measure  $\rho: H^{\Psi} \to \R$ on an Orlicz heart $H^\Psi$, 
  \citet[Theorem 2.6]{KSZ14} showed that the empirical estimator $\widehat \rho(n)$ is strongly consistent for a stationary and ergodic data sequence $(X_n)_{n\in \N}$; that is, $\widehat \rho(n) \to \rho(X_1)$ almost surely.
  For a general variability measure $\nu$, 
a similar result holds if we further assume that $\nu$ is norm-continuous on $H^{\Psi}$, following the same arguments as in the proof of Theorem 2.6 of \cite{KSZ14}, where the only used property of $\rho$ is norm-continuity.
This statement includes consistency of $\widehat\Delta_p^\ES (n)$ and $\widehat\Delta_p^\ex (n)$ on $L^1$ in part (i) of Theorem \ref{th:5} as special cases.
 For real-valued convex risk measures, norm-continuity is implied by monotonicity and convexity. 
To establish such continuity, 
monotonicity is essential and it plays the role of positivity in the Namioka-Klee theorem for linear functions; see \cite{BF09}. For convex variability measures, since monotonicity is not satisfied, we have to assume norm-continuity for the analog of Theorem 2.6 of \cite{KSZ14}. 
 \end{remark}

\section{A rule of thumb for cross comparison}
\label{sec6}

As mentioned in the introduction, we are interested in comparing the inter-quantile, the inter-ES and the inter-expectile differences. 
Due to the different meanings of the parameter $p$ in $\VaR_p$, $\ES_p$ and $\ex_p$, 
there is no reason to directly compare $\Delta^Q_p$, $\Delta^\ES_p$ and $\Delta^\ex_p$ using the same probability level $p$.
For a fair cross comparison, 
we may calibrate $p,q,r$ such that the variability measures  have the same value,   that is,
$$\Delta^Q_{p} = \Delta_{q}^\ES = \Delta^\ex_{r},$$
 for some common choices of distributions. In particular, we will consider normal (N), $t$- and   exponential
distributions as benchmarks, and the curves of $q$ and $r$ in terms of $p$ for these distributions are plotted in Figure \ref{fig:2}. 
We observe that the values of $r$ is typically much closer to $1$ than the corresponding $p$ or $q$.
The matching value of $q$ is smaller than the corresponding  $p$ but the relationship between $q$ and $p$ is close to linear; a corresponding observation on comparing VaR and ES is noted by \cite{LW19}, where they obtained the ratios $(1-q)/(1-p)\approx 2.5  $   for normal risks and $(1-q)/(1-p)=e\approx 2.72$ for exponential risks (this corresponds to the straight line in Figure \ref{fig1d}).

In empirical studies, 
it has been costumary in the literature to use the matching values for normal distribution as a rule of thumb for general comparisons; note that the location and scale parameters are irrelevant for such a comparison due to location-invariance and positive homogeneity. 
Roughly, we obtain 
$$\Delta^Q_{p} \approx \Delta_{q}^\ES \approx \Delta^\ex_{r}$$
for $(p,q,r)\in\{ (0.9, 0.75, 0.97), (0.95,0.875,0.99), (0.99,0.97,0.999)\}$.
For the particular choice of $p=0.95$,  it means that  $\Delta^Q_{0.95} \approx \Delta^{\ES}_{0.875} \approx \Delta^\ex_{0.99}$ for normal risks. We will compare these variability measures on real data in the next section, confirming the well-known departure from normality of financial returns.

\begin{figure}[htbp]
	\centering
    \begin{subfigure}{.4\textwidth}
		\centering
		\includegraphics[scale=0.6]{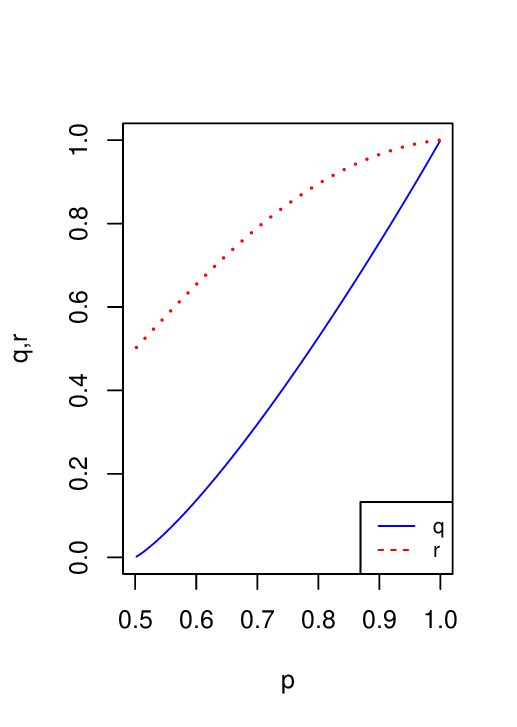}
		\caption{$q,r$ for $p\in (0.5,1)$ in $\mathrm N(0,1)$.}
		\label{fig1c}
	\end{subfigure}
	\begin{subfigure}{.4\textwidth}
		\centering 
		\includegraphics[scale=0.6]{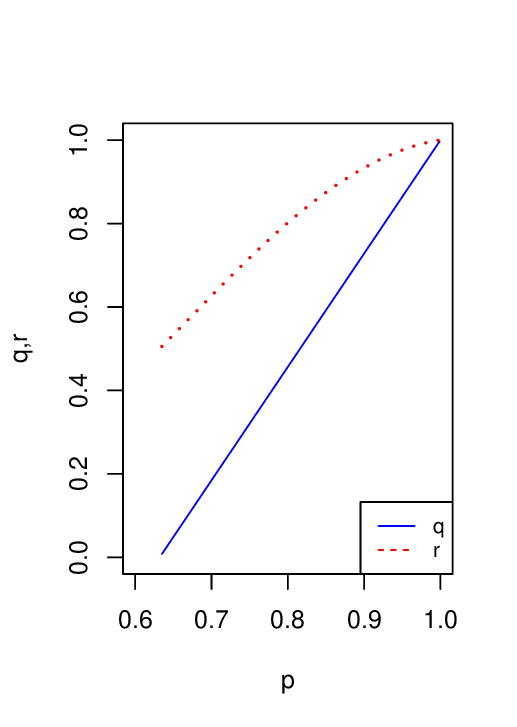}
		\caption{$q,r$ for $p\in (0.5,1)$ in $\exp(1)$.}
		\label{fig1d}
	\end{subfigure} 
     \begin{subfigure}{.4\textwidth}
		\centering
		\includegraphics[scale=0.6]{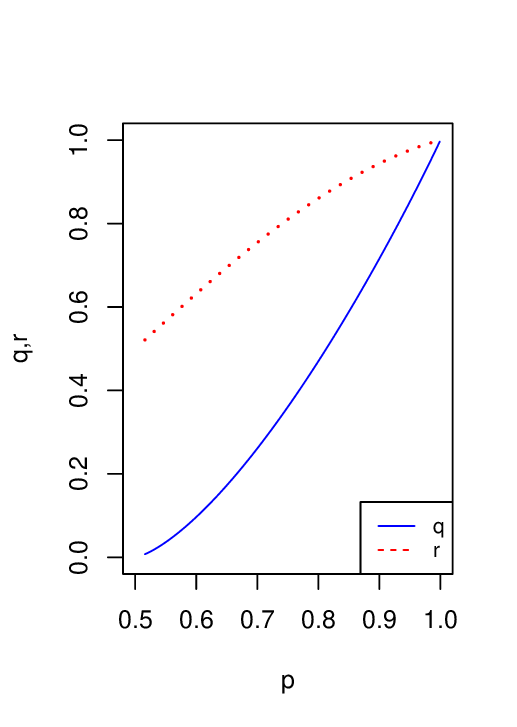}
		\caption{$q,r$ for $p\in (0.5,1)$ in $t(4)$.}
		\label{fig1e}
	\end{subfigure}
	 \begin{subfigure}{.4\textwidth}
		\centering
		\includegraphics[scale=0.6]{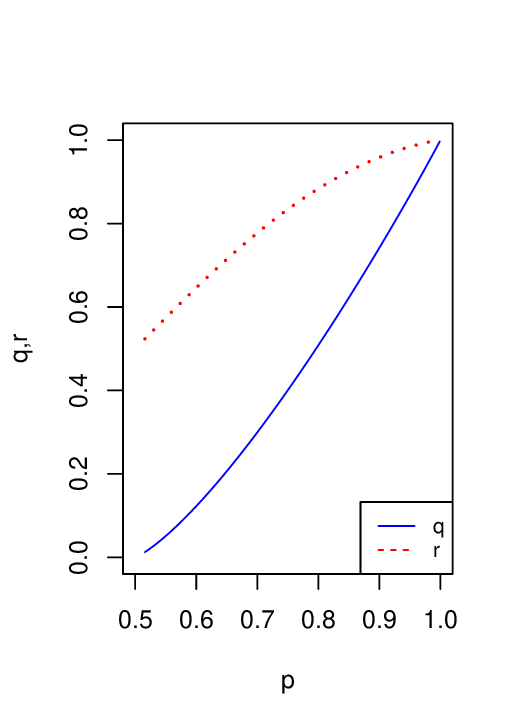}
		\caption{$q,r$ for $p\in (0.5,1)$ in $t(10)$.}
		\label{fig1f}
	\end{subfigure}
\caption{$q,r$ such that $\Delta^Q_{p} = \Delta_{q}^\ES = \Delta^\ex_{r}$ for $p\in (0.5,1)$.}	
\label{fig:2}
 \end{figure}

\section{Empirical analysis}
\label{sec7}

In this section, we illustrate   the three classes of variability measures studied in this paper  by means of a few empirical studies on financial data. 

We first analyze the difference between the performances of these variability measures during different periods of time (different economic regimes). Our data are the historical price movements
spanned from $01/04/1999$ to $06/30/2020$ of the S\&P $500$ index.\footnote{The source of the price data is Yahoo Finance.} We use its daily log-loss data\footnote{We use the log-loss (negative log-return) to be consistent  with most studies on financial asset return data. Note that since our  variability measures are symmetric, using log-losses is equivalent to using log-returns.} over the observation period with moving window of $253$ days for daily estimation of the variability measures. 
To compare the relative performance of the three measures, 
we report the   ratios  $\Delta_q^\ES/\Delta_r^\ex$ and $\Delta_q^\ES/\Delta_p^Q$ for the S\&P $500$ daily log-losses  in   Figures \ref{ratio2} and  \ref{ratio1} using the rule of thumb for $(p,q,r)$ obtained in Section \ref{sec6} induced by the normal distribution. In Figure \ref{ratio2}, spikes in the ratio of $\Delta_q^\ES/\Delta_p^Q$ are located around the $2008$ subprime crisis and the COVID-$19$ period. On the other hand, the ratio $\Delta_q^\ES/\Delta_r^{\ex}$ in Figure \ref{ratio1} experiences a down-slide around the subprime crisis and the COVID-$19$ period. 
These results suggest that $\Delta_q^\ES$ is more sensitive to extremely large losses than $\Delta_p^Q$, but $\Delta_r^\ex$ is even  more sensitive than $\Delta_q^\ES$.
Recall that these ratios should be $1$ if the underlying losses are normally distributed,
whereas we observe  $\Delta_q^\ES/\Delta_p^Q>1$ and $\Delta^\ES_q/\Delta_r^{\ex}<1$ for most dates during the period of 2000 - 2020 ($\Delta_q^\ES/\Delta_r^{\ex}$ is almost always smaller than 1). 
Hence, Figures \ref{ratio2} and  \ref{ratio1} confirm that the log-losses of S\&P $500$  are not normally distributed, and in fact, they typically show paretian tails, as is well studied in the literature (see, e.g., \cite{MFE15}).
 
\begin{figure}[H]
	\centering
     \begin{subfigure}{.4\textwidth}
		\centering
		\includegraphics[scale=0.4]{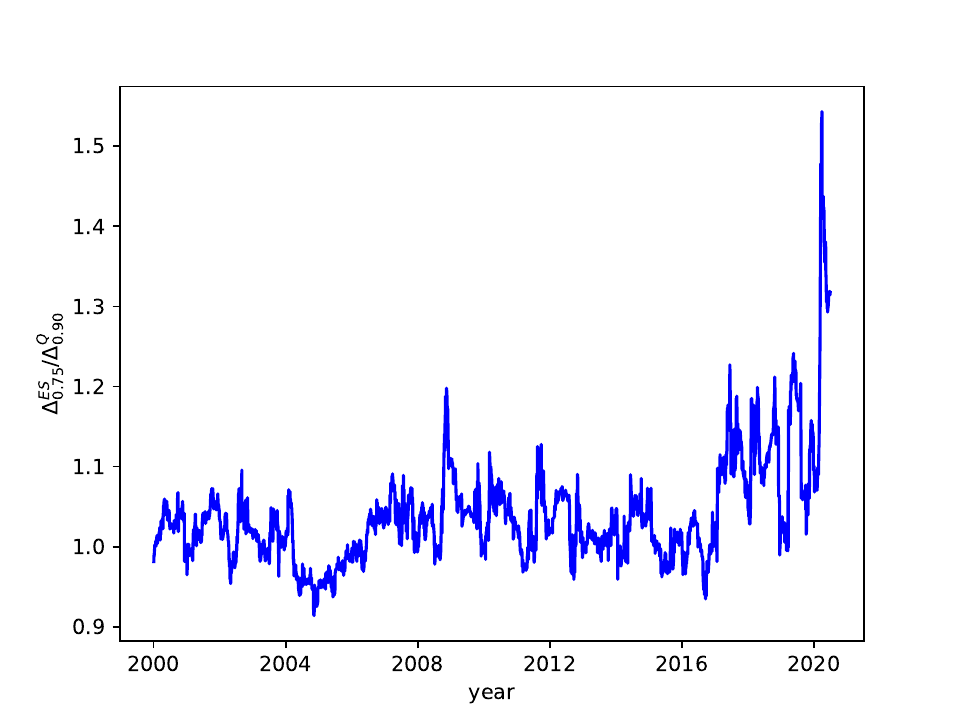}
		\label{fig1e}
	\end{subfigure}
    \begin{subfigure}{.4\textwidth}
		\centering 
		\includegraphics[scale=0.4]{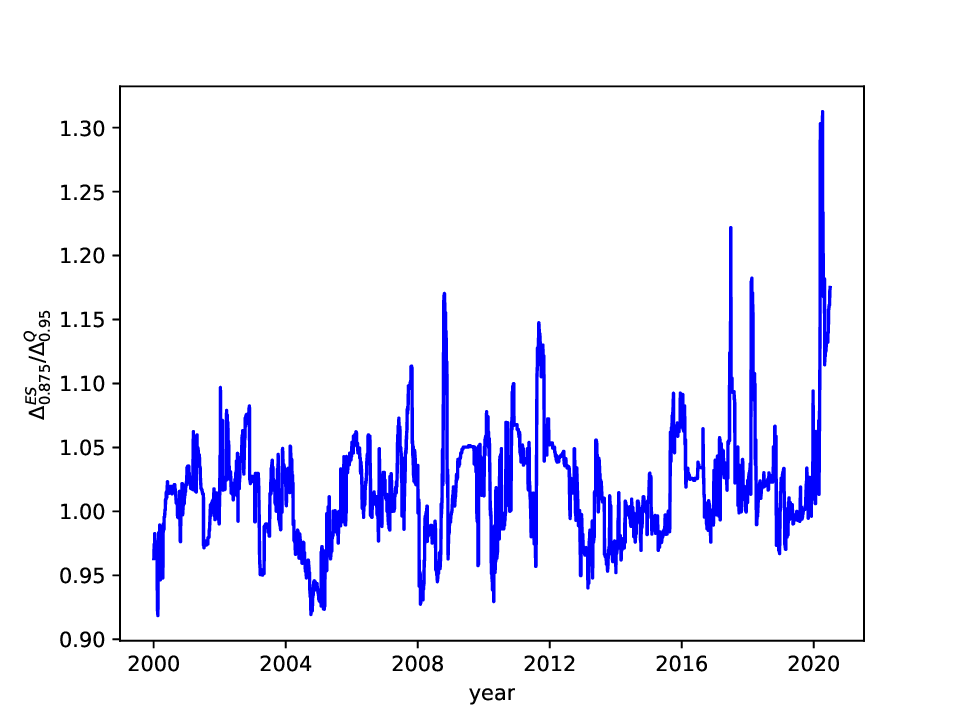}
		\label{fig1f}
	\end{subfigure}
\caption{The ratio of $\Delta^\ES_q$ to $\Delta^Q_p$ using S$\&$P 500  daily log-loss data (Jan 2000 - Jun 2020). Left: $(p,q)=(0.9,0.75)$. Right: $(p,q)=(0.95,0.875)$. }	
	\label{ratio2}
 \end{figure}

\begin{figure}[H]
	\begin{center}
     \begin{subfigure}{.4\textwidth}
		\includegraphics[scale=0.4]{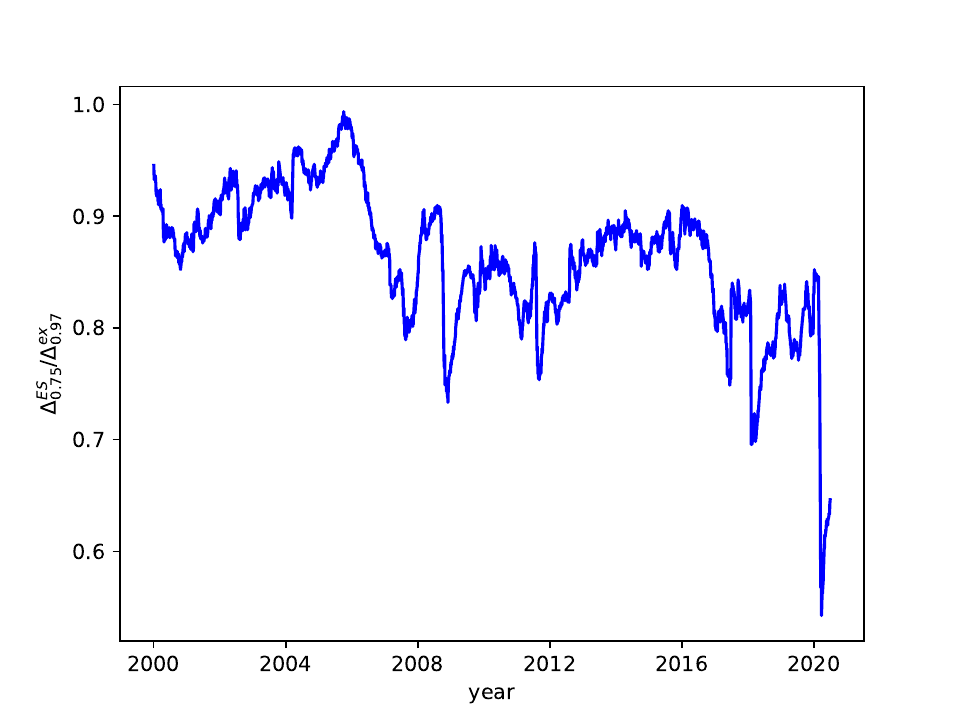}
		\label{fig1a}
	\end{subfigure}
    \begin{subfigure}{.4\textwidth}
		\includegraphics[scale=0.4]{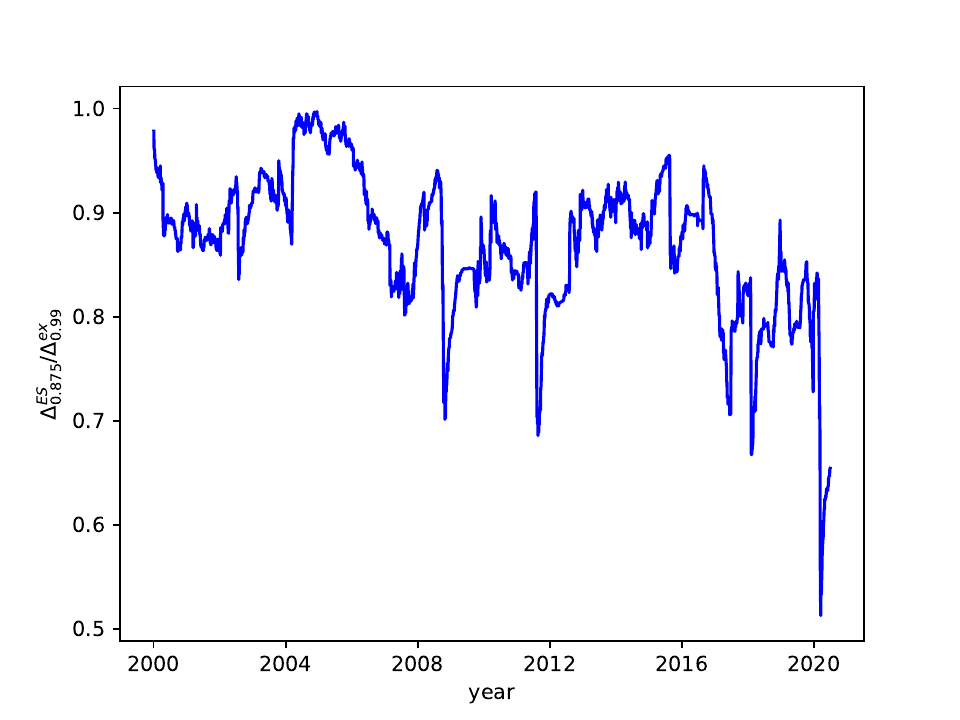}
		\label{fig1b}
	\end{subfigure}
\caption{The ratio of $\Delta^\ES_q$ to $\Delta^\ex_r$ using S$\&$P 500 daily log-loss data  (Jan 2000 - Jun 2020). Left: $(q,r)=(0.75,0.97)$. Right: $(q,r)=(0.875,0.99)$.}
	\label{ratio1}
		\end{center}
 \end{figure}
As a second empirical illustration, we compare  the distributions of the log-returns of Facebook and Berkshire Hathaway Inc.\ during the year 2020, displayed in Fig.\ \ref{fig_ord_1}. In this very peculiar year Facebook made $+33.09 \%$ with annualized volatility $46.16 \%$, and Berkshire Hathaway's made only $+2.37 \%$ with annualized volatility $35.02\%$. 
In order to check if the two distributions are comparable in one of the symmetric variability orders considered in Section \ref{sec4b}, we recall that an equivalent condition for the dilation order is given by
$$
X \leq_{\rm dil} Y \iff \ES_p(X)-\E[X] \leq \ES_p(Y)-\E[Y], \text{ for each } p \in (0,1).
$$
We see in the left panel of Fig.\ \ref{fig_ord_2} that there is an intersection point in the $\ES_p - \E$ curves, so Facebook's log-returns do not dominate Berkshire Hathaway's according to the dilation order (and vice versa). In this specific example, this is due to the presence of two  large values in the distribution of Berkshire Hathaway's daily log-returns. On the contrary, looking at the center and left panels of Fig.\ \ref{fig_ord_2}   we see that there are no intersection points, so Facebook's log-returns dominate Berkshire Hathaway's according to both the $\leq_{\rm \Delta{\text -}\ES}$ and $\leq_{\rm \Delta{\text -}\ex}$ orders. Hence, both $\leq_{\rm \Delta{\text -}\ES}$ and $\leq_{\rm \Delta{\text -}\ex}$  are able to model an ordering relation in the variability between two distributions, when the classic dilation order fails to hold, and this shows the additional flexibility of the new orders over the classic notion.

As a third example, we compare the distributions of log-returns of the S\&P500 Index in 2008 and in 2020, displayed in  Fig.\ \ref{fig_ord_3}. As in the previous example, we plot the relevant curves in Fig.\ \ref{fig_ord_4}. Here there is an intersection point both in the left and in the center panel, and no intersections in the right panel, so only the $\leq_{\rm \Delta{\text -}\ex}$ order applies. 

In order to give a first exploratory assessment of how often the various symmetric variability orders do apply, we checked the comparability of daily log-returns of the S\&P500 Index for each pair of years ranging from 2008 to 2020, for a total of $78 =13 \times 12/2$ pairs. The results are reported in Table \ref{table:ord_freq}. It turns out that in $66$ cases the $\leq_{\rm dil}$ order applies, and so as a consequence also the other two weaker orders apply. In the remaining $12$ cases, one or both of the $\leq_{\rm \Delta{\text -}\ES}$ and $\leq_{\rm \Delta{\text -}\ex}$ orders apply in $8$ cases, so when the $\leq_{\rm dil}$ order does not apply, we have a fraction of $8/12 \simeq 67 \%$  of cases in which the data can still be compared. 
Notice also that the $\leq_{\rm \Delta{\text -}\ES}$ order without the $\leq_{\rm \Delta{\text -}\ex}$ order never occurred for this dataset; however, Example \ref{ex} in Section \ref{sec4b} shows that also this situation is theoretically possible. 
\begin{table}[ht]
\centering 
\begin{tabular}{c|c|c|c|c} 
N & $\leq_{\rm dil}$ & $\leq_{\rm \Delta{\text -}\ES}$ & $\leq_{\rm \Delta{\text -}\ex}$ & pairs of years ($20$XX) \\ 
\hline 
66 & $\checkmark$ & $\checkmark$ & $\checkmark$ & all the others \\ 
6 & $\times$ & $\checkmark$ & $\checkmark$ & $(09,11), (10,15), (11,18), (12,14), (12,13), (15,18)$  \\
0 & $\times$ & $\checkmark$ & $\times$ & - \\
2 & $\times$ & $\times$ & $\checkmark$ & $(08,20), (10,18)$ \\
4 & $\times$ & $\times$ & $\times$ & $(12,16), (12,19), (13,14), (16,19)$ \\
\end{tabular}
\caption{Number of occurencies of the symmetric variability orders $\leq_{\rm dil}$, $\leq_{\rm \Delta{\text -}\ES}$ and $\leq_{\rm \Delta{\text -}\ex}$ in the $78=13 \times 12 /2$ pairs of years of daily log-returns of the S\&P500 Index, ranging from $2008$ to $2020$, and corresponding pairs. For brevity we report only years' last two digits. }
\label{table:ord_freq} 
\end{table}


\begin{figure}[htbp]
\centering 
\includegraphics[width=0.95\textwidth]{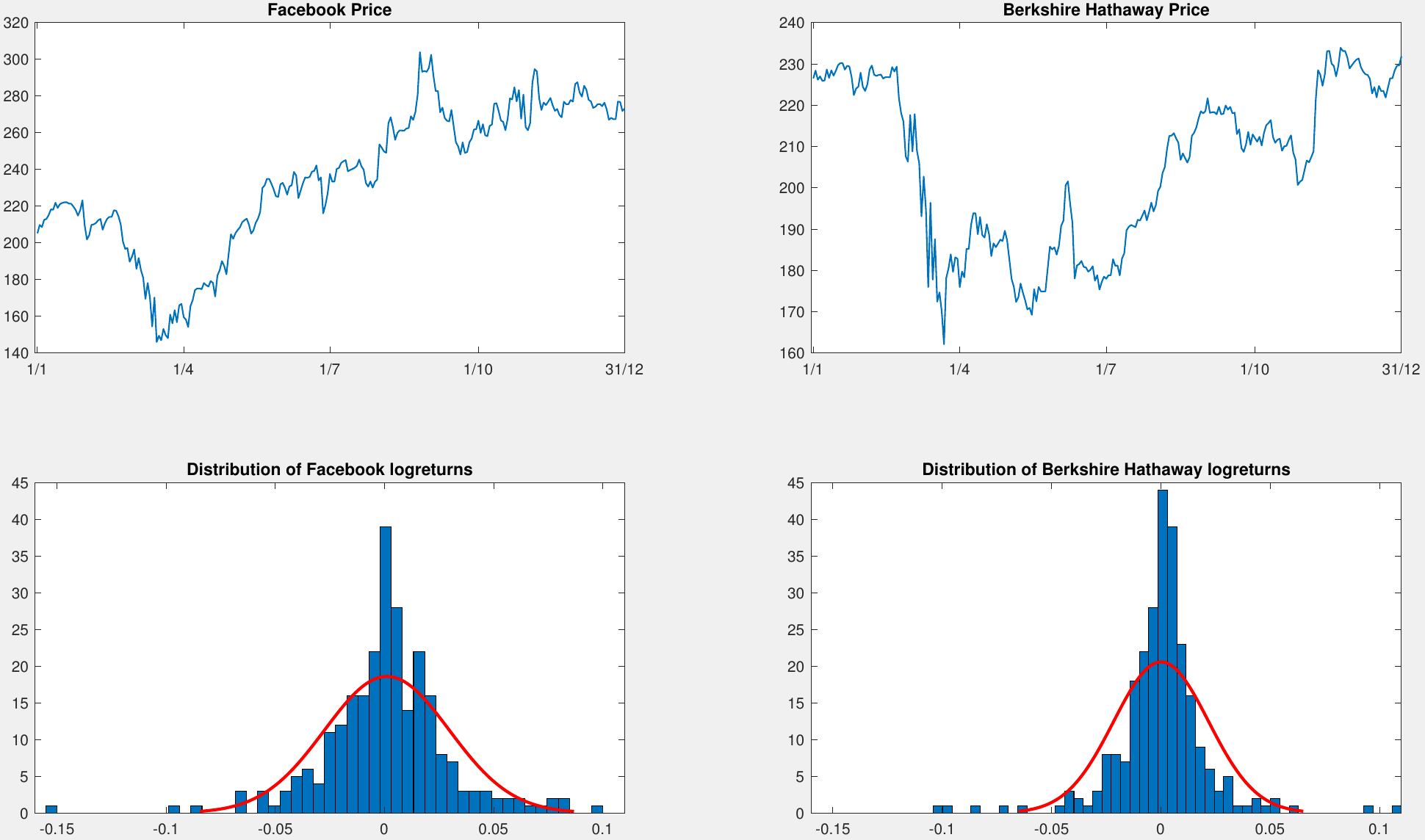}
\caption{Prices and log-return distributions of Facebook and Berkshire Hathaway in 2020.}
\label{fig_ord_1}
\end{figure}

\begin{figure}[t]
\centering 
\includegraphics[width=0.95\textwidth]{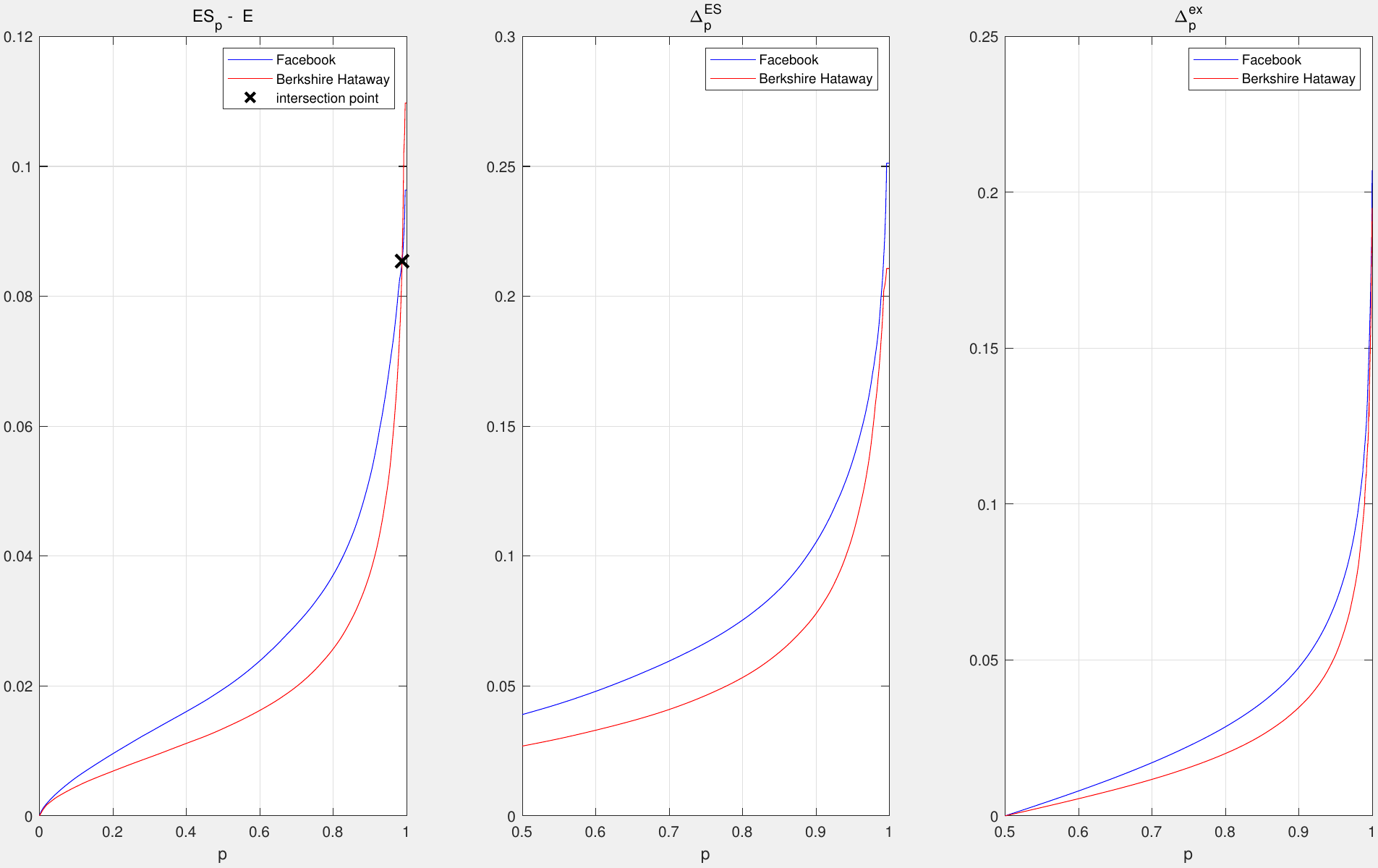}
\caption{Symmetric variability orderings between log-returns of Facebook and Berkshire Hathaway in 2020.\ To check comparability, we plot $\ES_p - \E$ (left panel), $\Delta_p^{\ES}$ (center panel) and $\Delta_p^{\ex}$ (right panel) as a function of $p$. Facebook's log-returns dominate Berkshire Hathaway's in the $\leq_{\rm \Delta{\text -}\ES}$ and $\leq_{\rm \Delta{\text -}\ex}$ orders, but not in the $\leq_{\rm dil}$ order.}
\label{fig_ord_2}
\end{figure}

\begin{figure}[htbp]
\centering 
\includegraphics[width=0.95\textwidth]{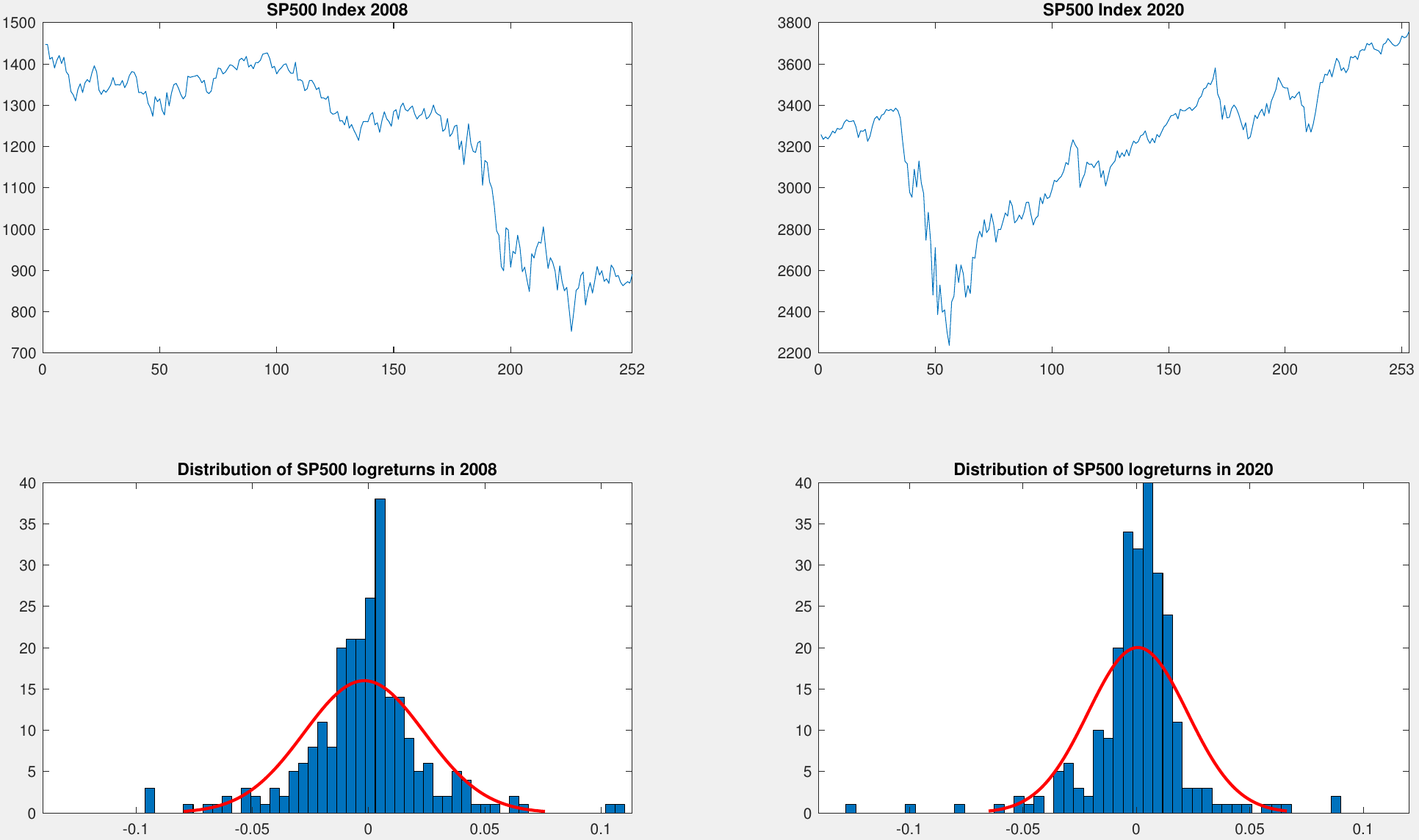}
\caption{Values and log-return distributions of the SPX Index in 2008 and 2020.}
\label{fig_ord_3}
\end{figure}

\begin{figure}[t]
\centering 
\includegraphics[width=0.95\textwidth]{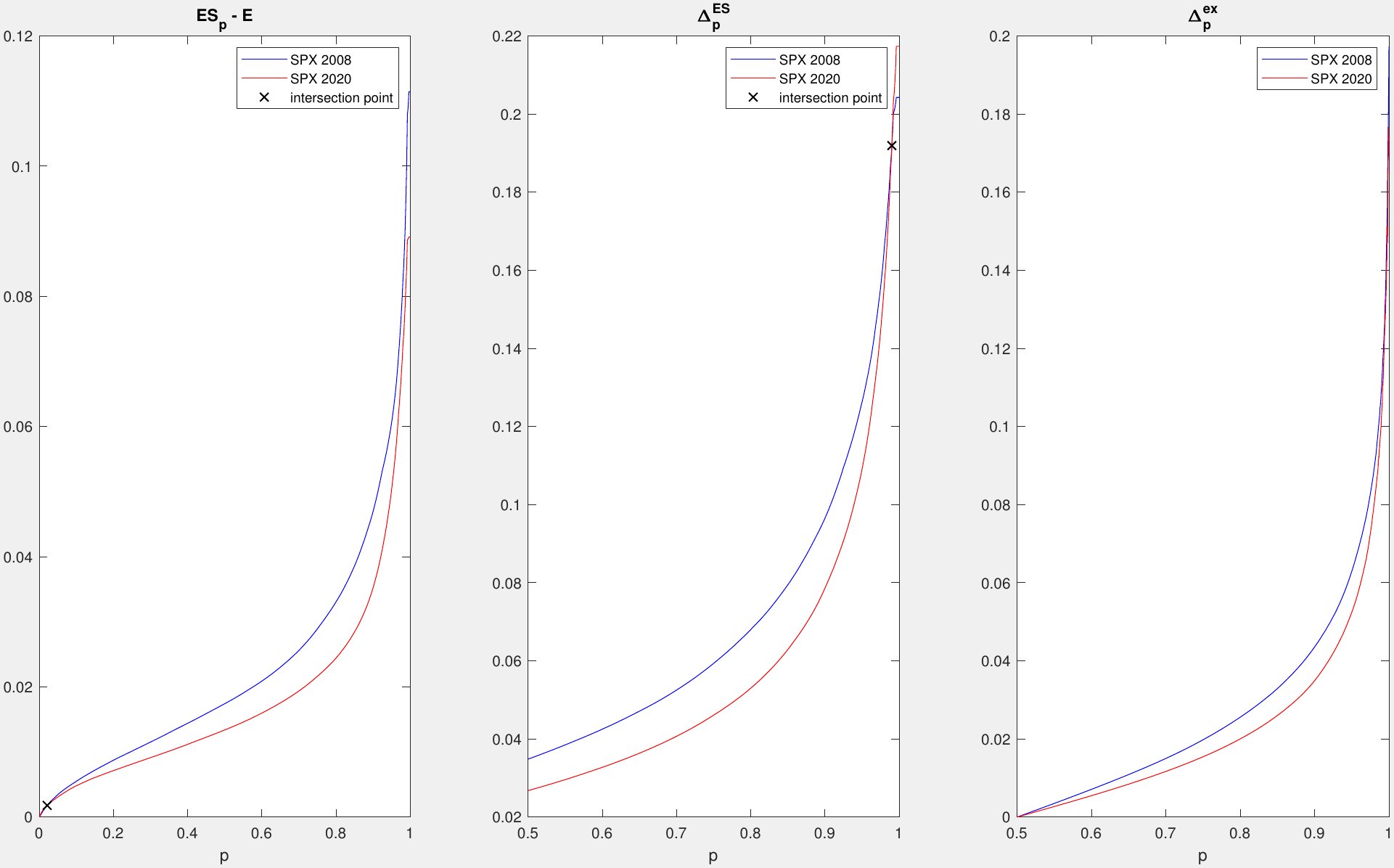}
\caption{Symmetric variability orderings between log-returns of the SPX Index in 2008 and 2020.\ To check comparability, we plot $\ES_p - \E$ (left panel), $\Delta_p^{\ES}$ (center panel) and $\Delta_p^{\ex}$ (right panel) as a function of $p$. The SPX 2008 log-returns dominate the 2020 log-returns in the $\leq_{\rm \Delta{\text -}\ex}$ order, but not in the $\leq_{\rm dil}$ and in the $\leq_{\rm \Delta{\text -}\ES}$ orders.}
\label{fig_ord_4}
\end{figure}

\section{Conclusion} \label{sec8}
In this paper, we introduce variability measures induced by three very popular parametric families of risk measures, that is, the inter-quantile, the inter-ES, and the inter-expectile 
differences.
The three classes of variability measures  enjoy many nice theoretical properties (Theorem \ref{th:0}); in particular, each of them characterizes symmetric distributions up to a location shift (Proposition \ref{th:4}).  
 We study  several desirable functional properties of general variability measures including the above three classes and many other classic ones; a grand summary is obtained in Theorem \ref{th:1} and Table \ref{tab:1}.
  The family of variability measures that satisfy a set of desirable properties  is characterized as mixtures of inter-ES differences (Theorem \ref{th:2}).
It is important to note that the three classes of variability measures introduced in this paper are well defined on $L^1$ and that each depends on a single parameter which allows  for flexible applications. This distinguishes them from other deviation measures (e.g., \cite{RUZ06}) where no parametric family is given. The empirical estimators of the inter-quantile, the inter-ES, and the inter-expectile differences can be formulated based on those of $\VaR$, $\ES$ and the expectile, and the asymptotic normality of the estimators is established (Theorem \ref{th:5}). In the financial application, we observe that the behaviour of these variability measures is similar to the corresponding 
parametric families of risk measures. However, a comparison of different ratio of the variability measures reveals that $\Delta^{\ex}$ is the most sensitive to extreme losses, and $\Delta^Q$ is the least sensitive. 

For the end-user, if  tail risk is of particular concern, then $\Delta^\ex$ may be a better variability measure to use, as it captures tail-heaviness quite effectively. However, $\Delta^\ex$ is usually cumbersome in computation and optimization because of the lack of explicit formulas in terms of quantile or distribution functions; another technical disadvantage is that $\Delta^\ex$ is not concave with respect to mixtures.  On the other hand, if robustness is more important and tail risk is not relevant, then $\Delta^Q$ is a good choice, because quantiles are easy to compute and they are generally more robust than coherent risk measures including ES and expectiles (see \cite{CDS10}). Moreover, $\Delta^Q$ is well defined on risks without a finite mean; nevertheless we should  keep in mind that $\Delta^Q$ ignores tail risk just like a quantile. Finally, $\Delta^{\ES}$ lies somewhere in between $\Delta^Q$  and $\Delta^\ex$ regarding the above considerations, which giving rise to a good compromise; further, it is the only one among the three classes that is concave with respect to mixtures (see Table \ref{tab:1}), and it is the building block for many other measures of variability (see Theorem \ref{th:2}).

In the literature, 
risk measures are commonly defined on a space of both positive and negative random variables.
For this reason, our variability measures are also defined on such spaces, and we omit  a detailed study of relative variability
measures which are defined only for positive random variables. Relative variability measures include important examples such as the relative deviation and the Gini coefficient; see Appendix \ref{app:a1}.   
By replacing    classic risk measures with relative risk measures (e.g., \cite{PQWY12}), one could define new classes of relative risk measures. 
 On the other hand, other parametric families of risk measures, such as entropic risk measures (e.g., \cite{FS16}) and RVaR (e.g., \cite{ELW18}), can also be used to design flexible variability measures.

\appendix

\section{Classic variability measures}
\label{app:a1}

Below we list some classic variability measures, which are formulated on their respective effective domains. 
\begin{enumerate}[(i)]\item  The variance ($\var$)
$$ \E[(X-\E[X])^2], ~~X \in L^2.$$
\item The standard deviation (STD):
$$
\sqrt{\var(X)}, ~~ X\in L^2.$$
\item The range ($\Delta_1$):
$$
\esssup(X)-\essinf(X), ~~ X \in L^\infty.$$
\item The mean absolute deviation (MAD):
$$
\E[|X-\E[X]|], ~~ X \in L^1.$$
\item The mean median deviation (MMD):
$$
\min_{x\in \R}\E[|X-x|]=\E[|X-Q_{1/2}(X)|], ~~ X \in L^1.$$
\item The Gini deviation (Gini-D):
$$
 \frac12\E[|X_1-X_2|] , ~~ X \in L^1,~X_1,X_2,X\mbox{ are iid}.$$
\item The relative deviation:
$$
\frac{\SD(X)}{\E[X]}, ~~ X \in L^2_+.$$
\item The Gini coefficient:
$$
\frac{\E[|X_1-X_2|]}{2\E[X]}=\frac{\Gini(X)}{\E[X]}, ~~ X \in L^1_+,~X_1,X_2,X\mbox{ are iid}.$$
\end{enumerate}
Here, $L^q_+$, $q\in [0,\infty]$ is the set of all non-negative random variables $X$ in $L^q$ with $\p(X>0)>0$.

\section{Proofs of main results}
\label{app:B}

\begin{proof}[Proof of Theorem \ref{th:0}]
\begin{enumerate}[(i)]
\item Law invariance (A1) is obvious. 
For standardization (A2), note that the risk measures $\rho\in\{Q_p,Q_p^-, \ES_p, \ES_p^-,\ex_p\}$ are all monetary (\cite{FS16}) and satisfies $\rho(m)=m$ for any constant $m$. Hence, 
for a constant $m$, $\Delta^Q_p(m)=\Delta^\ES_p(m)=\Delta^\ex_p(m)=0$. 
Positive homogeneity follows from that of $Q_p$, $Q_p^-$, $\ES_p$,  $\ES_p^-$   and $\ex_p$.
\item The effective domains of these variability measures can be easily checked from the effective domain of the corresponding risk measures.
\item Since $Q_p$ is increasing in $p$ and $Q_{1-p}^-$ is decreasing in $p$, $\Delta_p^Q$ is increasing in $p$. The same applies to $\Delta_p^{\ES}$ and $\Delta_p^{\ex}$.
\item It is well known that, for $X \in L^0$,
 $Q_p(-X) =  -Q^-_{1-p} (X)$; see e.g., \citet[(4.44)]{FS16}. Hence, 
$
\Delta^Q_p (X) = Q_p (X) + Q_{p} (-X).
$
 
 The formula for $\Delta^\ES_p$, 
$  \ES_p (X) -  \ES^-_{1-p} (X) = \ES_p (X) + \ES_p (-X),
$ follows directly from definition.

Next we show the formula for $\Delta^\ex_p$.
From \cite{NP87}, the expectile $\ex_p (X)$, for $p \in (1/2,1)$ is the unique solution $x$  to 
\begin{equation}
\label{expectile-eq}
p \E[(X-x)_+] = (1-p)\E[(X- x)_-].
\end{equation}
Hence, the expectile of $-X$ satisfies 
$$(1-p) \E[(-X-\ex_{1-p} (-X))_+] = p \E[(-X-\ex_{1-p} (-X))_-].$$
This is equivalent to 
$$p \E[(X+\ex_{1-p} (-X))_+] = (1-p) \E[(X+\ex_{1-p} (-X))_-].$$
The uniqueness of solution $x$ to \eqref{expectile-eq} implies $-\ex_{1-p} (X) = \ex_p (-X)$. Hence,
$$
\Delta^\ex_p (X)= \ex_p (X) -  \ex _{1-p} (X) = \ex_p (X) + \ex_p (-X),
$$
thus the desired formula. \qedhere
\end{enumerate}

\end{proof}

\begin{proof}[Proof of Proposition \ref{prop:eshalf}]
By definition, for $X\in L^1$,
\begin{align*}(1-p) \Delta_p^{\ES}(X) &= (1-p)   \frac{1}{1-p} \int_p^1 \left(Q_r(X)  -  Q_{1-r}(X)  \right)\d r\\ &
=  \int_{1-p}^1 \left(Q_r(X) -  Q_{1-r}(X)   \right)\d r 
- 
\int_{1-p}^p \left(Q_r(X)   -  Q_{1-r}(X)   \right)\d r
\\ &   
=   p \Delta_{1-p}^{\ES}(X) 
- 
\int_{1-p}^p Q_r(X)   \d r  + \int_{1-p}^p  Q _{r}(X)   \d r
\\& = p \Delta_{1-p}^{\ES}(X).
\end{align*}
By Theorem \ref{th:0},
\begin{align*}
\Delta_p^{\ES}(X) = \frac{1}{1-p} \int_p^1 Q_{q}(X) \d q  + \frac{1}{1-p} \int_p^1 Q_{q}(-X) \d q   
& = \frac{1}{1-p} \int_p^1 \Delta_q^Q (X) \d q.
\end{align*}
Hence, the desired statements hold.
\end{proof}

\begin{proof}[Proof of Theorem \ref{th:1}] 
We first explain some general observations on all variability measures in Table \ref{tab:1}. 
 The effective domains   and the homogeneity indices follow directly from definition. 
Continuity (B2) is implied by   $L^q$ continuity since all variability measures are finite and thus continuous on their effective domains. Symmetry (B3) and location invariance (B8) are straightforward to check, and they hold for all variability measures in Table \ref{tab:1}.

The conditions (B5)-(B7) are connected. In particular, Theorem 3 of \cite{WWW20} states that (B5)-(B7) are equivalent for \emph{distortion riskmetrics}, which are functionals satisfying (A1), (B4) and some continuity assumptions. It is well known that the inter-quantile differences and the inter-ES differences are distortion riskmetrics. 

Next, we explain that convexity (B6)  implies Cx-consistency (B5) for all variability measures we consider.  
 By Theorem 2.2 of \cite{LCLW20}, all law-invariant convex risk functionals, i.e., functionals satisfying (A1), (B6) and (B8), can be written as the supremum of a family of convex distortion riskmetrics. 
Each distortion riskmetric is Cx-consistent as stated in Theorem 3 of \cite{WWW20}, and hence (B5) is implied by (B6). 
 The only negative statement for (B5) is made for the inter-quantile difference, which is a non-convex distortion riskmetric; this is shown in Table 1 of \cite{WWW20}, which contains also a list of other examples of distortion riskmetrics with their corresponding properties.  
 Hence, the inter-quantile difference does not satisfy any of (B5)-(B7). 
 
It remains to verify (B1), (B4), (B6), (B7) for each variability measure.
\begin{enumerate}[(i)]
	\item  The following example shows that  $\Delta_p^Q$ does not satisfy (B1). 	Take $\epsilon>0$ such that $p+\epsilon<1$ and $X\sim \mathrm{Bernoulli}(1-p-\epsilon)$. Notice that $X$ is not a constant but $\Delta_p^Q(X)=Q_p(X)-Q_{1-p}^-(X)=0-0=0$.  C-additivity (B4) is satisfied since $\Delta_p^Q$ is a distortion riskmetric. (B6)-(B7) are explained above. 

	\item   $\Delta_{p}^{\ES}$, Gini-D and range  are all convex distortion riskmetrics; see Table 1 of \cite{WWW20}. Hence, they all satisfy (B4)-(B7).  Relevance (B1) can be easily verified. 
	\item   If $X$ is not a constant, by \citet[Theorem 1]{NP87}, $\ex_p$ is strictly increasing in $p\in (0,1)$, which means that $\Delta_p^{\ex}(X)=\ex_p(X)-\ex_{1-p}(X)>0$ for $p\in (1/2,1)$. By Proposition 7 of \cite{BKMR14}, $\ex_p$ is increasing in $X$, so for $|X|\le 1$, $-1\le \ex_p(X)\le 1$ for $p\in (0,1)$. Thus $\Delta_{p}^{\ex}(X)\le 2$ and  Relevance (B1)  is satisfied.  Convexity (B6) is satisfied by Theorem \ref{th:0} (iv) and  convexity of expectiles.  
	
	We show that M-concavity (B7) is not satisfied by $\Delta_p^{\ex}(X)$ via  the following example from   \cite{BBP18}. Take $p=1/10$.	Define $X$ by $\p(X=-1)=1/2$, and $\p(X=1)=1/2$; $Y$ by $\p(Y=0)=2/3$, $\p(Y=5)=1/3$. Then $\Delta_{1/10}^{\ex}(X)=-\frac{8}{5}$ and $\Delta_{1/10}^{\ex}(Y)=-\frac{800}{209}$.  
	
	Let $F=\frac{9}{10}F_X+\frac{1}{10}F_Y$ and $Z\sim F$. Then
	$$\Delta_{1/10}^{\ex}(Z)=-\frac{2531}{1311}<\frac{9}{10}\Delta_{1/10}^{\ex}(X)+\frac{1}{10}\Delta_{1/10}^{\ex}(Y)=-\frac{9524}{5225},$$
	and hence $\Delta_p^{\ex}$ is not mixture concave.
	 
 C-additivity (B4) is not satisfied since by Theorem \ref{th:2}, a variability measure satisfying (B1)-(B5) must satisfy (B7).
		
	\item For the variance, Relevance (B1) can be easily verified. 
	Variance does not satisfy (B4) since (B4) requires the homogeneity index to be $1$.
	For (B6), the variance is well known to be convex (\cite{DG85}); see also Example 2.2 of \cite{LCLW20}. 
	 	The variance satisfies M-concavity (B7) because of the well known equality
	$$
	\sigma^2(X) = \min_{x\in \R} \E[(X-x)^2],~~~X\in L^2.
	$$
	Since $\sigma^2$ is the minimum of mixture-linear functionals, we know that it is mixture concave.
%

	
	\item For STD, Relevance (B1) can be easily verified. 	C-additivity (B4) is not satisfied by STD  since STD is not additive for comonotonic random variables $X$ and $Y$ with correlation less than $1$. STD is convex (B6); see Example 2.1 of \cite{LCLW20}.  To show that STD satisfies M-concavity (B7), take $X,Y\in L^1$ and let $Z\sim \lambda F_X+(1-\lambda)F_Y$ for $\lambda\in [0,1]$. By definition,
	\begin{align*}
	&\sigma^2(Z)-(\lambda \sigma(X)+(1-\lambda)\sigma(Y))^2\\
	&=\lambda(1-\lambda)\(\E[X^2]+\E[Y^2]-2\E[X]\E[Y]-2\sigma(X)\sigma(Y)\)\\
	&=\lambda(1-\lambda)\(\E^2[X]+\sigma^2(X)+\E^2[Y]+\sigma^2(Y)-2\E[X]\E[Y]-2\sigma(X)\sigma(Y)\)\\
	&=\lambda(1-\lambda)\(\(\E[X]-\E[Y]\)^2+\(\sigma(X)-\sigma(Y)\)^2\)\ge 0,
	\end{align*}
	
	which is equivalent to $\sigma(Z)\ge \lambda\sigma(X)+(1-\lambda)\sigma(Y)$.

	\item For the mean absolute deviation (MAD), Relevance (B1) can be easily verified. 
 MAD satisfies convexity (B6), since, for   $\lambda\in [0,1]$ and $X,Y\in L^1$,
	\begin{align*}
	&\E[|\lambda X+(1-\lambda)Y-\lambda \E[X]-(1-\lambda)\E[Y]|]\\
	&\le \E[|\lambda X-\lambda\E[X]|]+\E[|(1-\lambda)(Y-\E[Y])|]=\lambda \E[|X-\E[X]|]+(1-\lambda)\E[|Y-\E[Y]|].
	\end{align*} 
	We give an example showing that MAD  does not satisfy M-concavity (B7). 
	Take $X\sim \mathrm{Bernoulli}(1/3)$, and $Y\laweq -X$.
	Let $F=\frac 12 F_X + \frac 12 F_Y$ and $Z\sim F$. 
	It is easy to calculate that $\E[X]=1/3$, $\E[Y]=-1/3$, $\E[Z]=0$,
	and 
	$\E[|X-\E[X]|=\E[|Y-\E[Y]| =4/9$.
	On the other hand,
	$$\E[|Z-\E[Z]|]= \frac{1}{2}\E[|X|] + \frac{1}2 \E[|Y|] = \frac 13.$$
	Therefore, $$\E[|Z-\E[Z]|]<\frac12  \E[|X-\E[X]| + \frac 12\E[|Y-\E[Y]|,$$
	and hence MAD is not mixture concave.
	
	C-additivity (B4) is not satisfied by MAD since by  Theorem \ref{th:2}, a variability measure satisfies (B1)-(B5) must satisfy (B7). \qedhere
\end{enumerate}

\end{proof}

\begin{proof}[Proof of Theorem \ref{th:2}]
Write the functional $\nu_\mu=  \int_0^1\Delta^{\ES}_p  \d \mu(p) $, which is the right-hand side of \eqref{eq:repES}. 
First, obviously (i) implies (ii). It is also straightforward to check that (iii) implies (i), since $\Delta^\ES_p$ for $p\in (0,1]$ satisfies (B1)-(B8) by Theorem \ref{th:1}, and so is $\nu_\mu$; the only non-trivial statement is (B2) of $\nu_\mu$ which is guaranteed by Theorem 5 of \cite{WWW20}, which shows that the representation $\nu_\mu$ belongs to a class of convex distortion riskmetrics with continuity (B2).
Below, we show (ii)$\Rightarrow$(iii).

Let $\X_\nu$ be the effective domain of $\nu$. 
Take $X\in \X_\nu$ such that $\nu(X)>0$. By (B4), $\nu(2X)= \nu(X)+\nu(X)=2\nu(X)$. Hence, the homogeneity index of $\nu$ is $1$. 

Suppose that Cx-consistency (B5) holds. Take any $X,Y\in \X_\nu$ and let $X'\laweq X$ and $Y'\laweq X$ such that $X'$ and $Y'$ are comonotonic.
It is well known that $X+Y\leq_{\rm cx} X'+Y'$; see e.g., Theorem 3.5 of \cite{R13}.
Using (B4) and (B5), we have
$$
\nu(X+Y) \le \nu (X'+Y') =\nu(X')+\nu (Y')=\nu(X)+\nu(Y).
$$
Therefore, $\nu$ is subadditive, that is,
\begin{equation}
\label{eq:subadd} \nu(X+Y) \le  \nu(X)+\nu(Y) \mbox{ for all }X,Y\in \X.
\end{equation} 
Note that convexity (B6) and homogeneity (A3) with $\alpha=1$ together also imply subadditivity.
Hence, either assuming (B5) or (B6), we get \eqref{eq:subadd}.  
It follows from  \eqref{eq:subadd} and (B1) that  there exists $\beta>0$  such that
$\nu(Y)-\nu(X) \le \nu (Y-X) \le \beta \Vert Y-X\Vert_\infty   $ where $\Vert Y-X \Vert_\infty$ is the essential supremum of $|Y-X|$.
Hence,  $\nu$ is uniformly continuous with respect to the supremum norm. 
Moreover,  as a consequence of (B1), (A3) and \eqref{eq:subadd}, $\X_\nu$ is a convex cone that contains $L^\infty$.

 Theorem  1  of \cite{WWW20} states that a real functional on a convex cone that is 
 uniformly continuous with respect to the supremum norm, law-invariant, and satisfying (B2) and (B4) 
is  a   distortion riskmetric in the sense of that paper; see \eqref{eq:distortionrep} below.
  Further, Theorem 3 of \cite{WWW20} says that each of (B5)-(B7) is equivalent to the convexity of a distortion riskmetric. 
Hence, $\nu$ is a convex distortion riskmetric on $\X_\nu\cap L^1$. 
 Theorem 5 of \cite{WWW20} gives a representation of 
 convex distortion riskmetrics; that is, 
$\nu$ has a representation, for some finite measures $\mu_1$ and $\mu_2$,
\begin{equation}
\label{eq:distortionrep}
 \nu (X) = \int_0^1 \ES_p (X) \d \mu_1(p) + \int_0^1 \ES_{p}(-X) \d \mu_2(p), ~~~X\in \X_\nu\cap L^1.
\end{equation}
By symmetry (B3), we know 
 $$
 \nu (X)= \nu (-X)  = \int_0^1 \ES_p (X) \d \mu_2(p) + \int_0^1 \ES_{p}(-X) \d \mu_1(p), ~~~X\in \X_\nu\cap L^1.
 $$
 Hence, we can take $\mu=(\mu_1+\mu_2)/2$,
 and get
  \begin{align*}
\nu(X) = \int_0^1\Delta^{\ES}_p (X)\d \mu(p),~~X\in \X_\nu\cap L^1.
\end{align*}
 Relevance (B1)   implies $\mu\ne 0$, which in turn implies $\X_\nu \subset L^1$, as the effective domain of $\Delta_p^\ES$ is $L^1$ for $p\in (0,1)$. 
 Hence, the two functionals $ \nu$  and $ \nu_\mu $ coincide on $\X_\nu$ which contains $L^\infty$.
 Also note that both 
 $ \nu$  and  $ \nu_\mu $   satisfy  continuity (B2),
 and hence one can approximate any random variable outside $\X_\nu$ with truncated random variables, and obtain that $ \nu$  and  $ \nu_\mu $  also coincide on $\X$. 
\end{proof}

\begin{proof}[Proof of Proposition \ref{th:4}]
\begin{enumerate}[(i)]
	\item If $X$ has a symmetric distribution, then by Theorem \ref{th:0} (iv),  we have $$\Delta_p^Q(X)=Q_p(X)-Q_{1-p}^-(X)=-Q_{1-p}^-(-X)-Q_{1-p}^-(X)=-2Q_{1-p}^-(X).$$  	Assume $X_1$ and $X_2$ are symmetric distributions with finite $\Delta_p^Q(X_1)=\Delta_p^Q(X_2)$ for $p\in (\frac{1}{2},1)$. It follows that $Q_p^-(X_1)=Q_p^-(X_2)$ for $p\in (0, \frac{1}{2})$. By the left-continuity of the left-quantile, $Q_{1/2}^-(X_1)=Q_{1/2}^-(X_2)$.
By symmetry of the distribution of $X$, we have $Q_{p}^-(X_1)=Q_{p}^-(X_2)$ almost every $p$, and thus $X_1$ and $X_2$ have the same distribution. 
\item If $X$ has a symmetric distribution, then similarly to (i),  we have $\Delta_{p}^{\ES}(X)=-2\ES_{1-p}^-(X).$
 	Assume that $X_1$ and $X_2$ have symmetric distributions with finite $\Delta_p^\ES(X_1)=\Delta_p^\ES(X_2)$ for $p\in (\frac{1}{2},1)$. It follows that  for $p \in (0, \frac{1}{2})$, $\ES_p^-(X_1)=\ES_p^-(X_2)$ holds, which means 
\begin{equation}\label{eq:es-equal}
 \int_0^p Q_r(X_1)\d r= \int_0^p Q_r(X_2)\d r.
\end{equation} 
 	By taking a derivative of both sides of \eqref{eq:es-equal} with respect to $p$, we get
 	$$Q_p(X_1)=Q_p(X_2)$$
	at all common continuity points $p$ of $p\mapsto Q_p(X_1)$ and $p\mapsto Q_p(X_1)$.  Since both functions are right-continuous, we know that the two functions are identical. 
 	This argument can be applied to any $p\in (0,\frac{1}{2})$. Similarly to part (i), we conclude that $X_1$ and $X_2$ have the same distribution.
\item  
 If $X$ has a symmetric distribution, then similarly to (i),  we have $$\Delta_{p}^{\ex}(X)=2\ex_p(X)=-2\ex_{1-p}(X).$$
 	Suppose $X_1$ and $X_2$ have symmetric distributions with finite $\Delta_{p}^{\ex}(X_1)=\Delta_{p}^{\ex}(X_2)$ for $p\in (\frac{1}{2},1)$. Then $\ex_p(X_1)=\ex_p(X_2)$ for $p\in (0,\frac{1}{2})\cup (\frac{1}{2},1).$ By symmetry, we observe that $\E[X_1]=\E[X_2]=0$, which means $\ex_{\frac{1}{2}}(X_1)=\ex_{\frac{1}{2}}(X_2)=0$, so $\ex_p(X_1)=\ex_p(X_2)$ for $p\in (0,1)$.
 
 The expectile has alternative definitions from \cite{NP87},
 $$\ex_p(X)=\E[X]+\frac{2p -1}{1-p}\E[(X-\ex_p(X))_+],$$
which leads to
 	$$\E[(X_1-\ex_p(X_1))_+]=\E[(X_2-\ex_p(X_2))_+].$$
	Since $\ex_p(X)$ is continuous in $p$ and takes all values in the range of $X$, we know 
	$$\E[(X_1-x)_+]=\E[(X_2-x)_+]$$
	for all $x\in \R$, implying that the distributions of $X_1$ and $X_2$ are identical. \qedhere
\end{enumerate} 
\end{proof}

\begin{proof}[Proof of Proposition \ref{th:orders}]
(i), (ii), (iii) follow immediately, respectively from location invariance, positive homogeneity of order  $1$ and symmetry of $\Delta_p^{\ES}$ and $\Delta_p^{\ex}$, while (iv) follows immediately from the second part of the thesis of Proposition \ref{prop:eshalf}.
\begin{itemize}
\item [(v)] By passing if necessary to the random variables $\tilde{X}=X-\E[X]$ and $\tilde{Y}=Y-\E[Y]$, from (i) we can assume without loss of generality that $\E[X]=\E[Y]=0$. Then $X\leq_{\rm dil} Y \Rightarrow X \leq_{\rm cx} Y$, and the thesis follows from Cx-consistency of $\Delta_p^{\ES}$ and $\Delta_p^{\ex}$, for each $p \in (1/2,1)$. 
\item [(vi)] As in (v), we can assume  without loss of generality  that $\E[X]=\E[Y]=0$. Then $\Delta_p^{\ES}(X)=2\ES_p(X)$, so $X\leq_{\rm \Delta{\text -}\ES}Y \Rightarrow \ES_p(X) \leq \ES_p(Y) \Rightarrow \int_p^1 Q_r(X) \d r \leq \int_p^1 Q_r(Y) \d r$, for each $p \in (1/2,1)$. From symmetry and the assumption $\E[X]=\E[Y]=0$ it follows that the same inequality holds also for each $p \in (0,1/2)$, that implies $X\leq_{\rm cx}Y$ by Theorem 3.A.5 in \cite{SS07}. Similarly, under symmetry $\Delta_p^{\ex}(X)=2\ex_p(X)$, so $X\leq_{\rm \Delta{\text -}\ex}Y \Rightarrow \ex_p(X) \leq \ex_p(Y)$ for each $p \in (1/2,1)$, and since $\ex_p(X)=\ex_p(-X)=-\ex_{1-p}(X)$, the opposite inequality holds for $p \in (0,1/2)$. By reasoning as in the proof of Theorem 12 of \cite{BKM18}, it follows that $\pi_X(x) \leq \pi_Y(x)$ for each $x \in \R$, where $\pi_X (x):=\E[(X-x)_+]$ and $\pi_Y (x):=\E[(Y-x)_+]$ are the usual stop-loss transforms of $X$ and $Y$; the thesis then follows from Theorem 3.A.1 of \cite{SS07}.  \qedhere
\end{itemize}
\end{proof}

\begin{proof}[Proof of Theorem \ref{th:5}]
\begin{enumerate}[(i)] 
    \item     Let $\widehat{Q}_p(n)$, $\widehat{\ES}_p(n)$, and $\widehat{\ex}_p(n)$  be the empirical estimators of $Q_p(X)$, $\ES_p(X)$, and $\ex_p(X)$ based on $n$ sample data points. It is well known (e.g., \cite{B66})  that $\widehat{Q}_r(n)\pto Q_r(X)$ at each $r$ of continuous point of $Q_r(X)$, which implies   $\widehat{\Delta}_p^Q(n)\pto \Delta_p^Q(X)$ under assumption (R).  Since $\ES_p$ and $\ex_p$ are law-invariant convex risk measures, by Theorem 2.6 of \cite{KSZ14},  $\widehat{\ES}_r(n)\pto \ES_r(X)$ and $\widehat{\ex}_r(n)\pto \ex_r(X)$ for each $r$. Hence we have $\widehat{\Delta}_p^{\ES}(n)\pto \Delta_p^{\ES}(X)$ and $\widehat{\Delta}_p^{\ex}(n)\pto \Delta_p^{\ex}(X)$.
      
   \item   By Proposition 1  of \citet[p.640]{SW09}, if assumption (R) is satisfied, then we have
   \begin{align}\label{conv_Q}
      \sqrt{n}\(\widehat{Q}_{p}(n) -  Q_{p}(X)\)\dto \frac{B_p}{g(p)}.
   \end{align}  where $B_p$ is a standard Brownian bridge.   
    With assumption (R),   $Q_p(X)=Q_p^-(X)$. Hence,
 \begin{align*}
   \sqrt{n}\(\widehat{\Delta}_p^Q(n) -  \Delta_p^Q(X)\)\dto  \frac{B_p}{g(p)} -  \frac{B_{1-p}}{g(1-p)} ,
   \end{align*} 
   which has a Gaussian distribution.
   Using the covariance property of the Brownian bridge, that is, $\cov[B_t, B_s] = s-st$ for $s<t$, we have   
   $$\cov\[\frac{B_p}{g(p)}, \frac{B_{1-p}}{g(1-p)}\] = \frac{(1-p)^2}{g(p)g(1-p)}.$$
    Therefore,  $
   \sqrt{n} (\widehat{\Delta}_p^Q(n) -  \Delta_p^Q(X) )\dto \mathrm{N}(0,\sigma_Q^2),
   $
   where $\sigma_Q^2$ is  in \eqref{eq:asymQ}, namely, $$\sigma_Q^2= \frac{ p(1-p)}{g^2(p)}  + \frac{ p(1-p)}{g^2(1-p)}   -2 \frac{(1-p)^2}{g(p)g(1-p)}.$$
   
 Next, we address the inter-ES difference. Applying the convergence in \eqref{conv_Q} to $\ES_p$, we obtain  
   $$\sqrt{n}\(\widehat{\ES}_p(n)-\ES_p(X)\)\dto \frac{1}{1-p}\int_p^1 \frac{B_s}{g(s)}\d s,$$
   and    thus
   $$ \sqrt{n} (\widehat \Delta_p^\ES (n) - \Delta_p^\ES (X)) \dto \frac{1}{1-p}\int_p^1 \frac{B_s}{g(s)}\d s-\frac{1}{1-p}\int_0^{1-p} \frac{B_s}{g(s)}\d s.$$ 
Note that  
\begin{align*}
 \var \[\frac{1}{1-p}\int_p^1 \frac{B_s}{g(s)}\d s\]&=\E\[\frac{1}{(1-p)^2}\int_p^1\int_p^1\frac{B_sB_t}{g(s)g(t)}\d t\d s\]\\
&=\frac{1}{(1-p)^2}\int_p^1\int_p^1 \frac{s\wedge t-st}{g(s)g(t)}\d t \d s,
\end{align*}
and 
    \begin{align*}
 \frac{1}{(1-p)^2}\cov \[ \int_{p}^{1} \frac{1}{g(t)} B_t \d t, \int_{0}^{1-p} \frac{1}{g(t)} B_t \d t \]
    & = \frac{1}{(1-p)^2} \int_{p}^{1} \int_{0}^{1-p} \frac{s\wedge t-st}{g(t)g(s)}  \d t  \d s.
    \end{align*} 
           Hence, 
    $ \sqrt{n} (\widehat \Delta_p^\ES (n) - \Delta_p^\ES (X)) \dto \mathrm{N}(0,\sigma^2_\ES),
    $
    with $  \sigma_\ES^2$ given in \eqref{eq:asymES}, namely,
$$    \sigma_\ES^2  =   \frac{1}{(1-p)^2} \left(\int_{[p,1]^2\cup [0,1-p]^2}  -2 \int_{[p,1]\times [0,1-p]}  \right)\frac{s\wedge t-st}{g(t)g(s)}\d t \d s  .$$

For the inter-expectile difference, we use Theorem 3.2 of \cite{KZ17}. The conditions for this theorem are satisfied in our setting noting that $X\in L^{2+\delta}$; see Remark 3.4 of \cite{KZ17}. We obtain, for $p\in (1/2,1)$,
    $$\sqrt{n} (\widehat \ex_{p}(n) - \ex_{p}(X) ) \rightarrow \mathrm{N}(0, s_p^\ex)$$
  where  for $r\in\{1-p,p\}$,
    $$s_r^\ex = \int_{-\infty}^\infty\int_{-\infty}^\infty f^\ex_{r,F}(t)f^\ex_{r,F}(s)F(t \wedge s)(1-F(t \vee s))\d t\d s,$$
    and 
    $$ f^\ex_{r,F}(t)=  \frac{ (1-r) \mathbb \id_{\{t\le \ex_r(X)\}}  +
r \mathbb \id_{\{t>\ex_r(X)\}}} 
{(1-2r)F(\ex_r(X))+ r}.$$
    Similar arguments as above lead to  
    $$ \sqrt{n} (\widehat \Delta_p^\ex (n) - \Delta_p^\ex (X)) \dto \mathrm{N}(0,s_p^\ex + s_{1-p}^\ex - 2c_p^{\ex}), $$
where 
$$c^{\ex}_p = \int_{-\infty}^\infty\int_{-\infty}^\infty f^{\ex}_{p,F}(t)  f^{\ex}_{1-p,F}(s)F(t \wedge s)(1-F(t \vee s))\d t \d s.$$
\end{enumerate} 
This completes the proof. 
 \end{proof}
 
 \subsubsection*{Acknowledgements}
 
 We thank an Editor, an Associate Editor, and two anonymous referees for helpful comments. 
    T.~Fadina and R.~Wang are supported by the Natural Sciences and Engineering Research Council of Canada (RGPIN-2018-03823, RGPAS-2018-522590).


\end{document}